\title{Adapting Local Sequential Algorithms to the Distributed Setting}
\titlerunning{}
\author{Ken-ichi Kawarabayashi}{National Institute of Informatics, Tokyo, Japan}{k\_keniti@nii.ac.jp}{}{}
\author{Gregory Schwartzman}{National Institute of Informatics, Tokyo, Japan}{greg@nii.ac.jp}{}{}
\authorrunning{K. Kawarabayashi and G. Schwartzman}
\subjclass{Theory of computation$\rightarrow$Distributed algorithms}
\keywords{Distributed, Approximation Algorithms, Derandomization, Max-Cut}
\newcommand{\etal}{\textit{et al.}\xspace}
\newenvironment{lemma-repeat}[1]{\begin{trivlist}
\item[\hspace{\labelsep}{\bf\noindent Lemma \ref{#1} }]\em }%
{\end{trivlist}}
\newenvironment{theorem-repeat}[1]{\begin{trivlist}
\item[\hspace{\labelsep}{\bf\noindent Theorem \ref{#1} }]\em }%
{\end{trivlist}}
\newcommand{\remove}[1]{}
\DeclareMathOperator*{\argmax}{argmax} 
\newcommand{\size}[1]{\ensuremath{\left|#1\right|}}
\newcommand{\set}[1]{\left\{ #1 \right\}}
\DeclarePairedDelimiter\ceil{\lceil}{\rceil}
\newcommand{\Xbar}{\overline{X}}
\newcommand{\Xnv}{L_v[\overline{X}]}
\newcommand{\LemCutsLocal}
{

The utility functions for Max $k$-Cut, Max-DiCut and max-agree correlation clustering with 2 clusters are local utility functions.
}
\newcommand{\ThmKuhnExt}
{
	For any constant $\epsilon \in (0,1/e)$ a weighted $\epsilon$-defective $O(\epsilon^{-2})$-coloring can be computed deterministically in $O(\log^* n)$ rounds in the CONGEST model.
}
\begin{document}

\maketitle
\begin{abstract}
It is a well known fact that sequential algorithms which exhibit a strong "local" nature can be adapted to the distributed setting given a legal graph coloring. The running time of the distributed algorithm will then be at least the number of colors. Surprisingly, this well known idea was never formally stated as a unified framework. In this paper we aim to define a robust family of local sequential algorithms which can be easily adapted to the distributed setting. We then develop new tools to further enhance these algorithms, achieving state of the art results for fundamental problems.

We define a simple class of greedy-like algorithms which we call \emph{orderless-local} algorithms.
We show that given a legal $c$-coloring of the graph, every algorithm in this family can be converted into a distributed algorithm running in $O(c)$ communication rounds in the CONGEST model.
We show that this family is indeed robust as both the method of conditional expectations and the unconstrained submodular maximization algorithm of Buchbinder \etal \cite{BuchbinderFNS15} can be expressed as orderless-local algorithms for \emph{local utility functions} --- Utility functions which have a strong local nature to them.

We use the above algorithms as a base for new distributed approximation algorithms for the weighted variants of some fundamental problems: Max $k$-Cut, Max-DiCut, Max 2-SAT and correlation clustering. We develop algorithms which have the same approximation guarantees as their sequential counterparts, up to a constant additive $\epsilon$ factor, while achieving an $O(\log^* n)$ running time for deterministic algorithms and $O(\epsilon^{-1})$ running time for randomized ones. This improves exponentially upon the currently best known algorithms.

\end{abstract}


\section{Introduction}
A large part of research in the distributed environment aims to develop fast distributed algorithms for problems which have already been studied in the sequential setting. Ideally, we would like to use the power of the distributed environment to achieve a substantial improvement in the running time over the sequential algorithm, and indeed, for many problems distributed algorithms achieve an exponential improvement over the sequential case.
One approach to designing distributed algorithms is using the sequential algorithm as natural staring point \cite{Censor-HillelLS17,Bar-YehudaCS17,Bar-YehudaCGS17,GallagerHS83,BaswanaS07}, then certain adjustments are made for the distributed environment in order to achieve a faster running time.

There is a well known folklore in distributed computing, which roughly says that if a sequential graph algorithm works by traversing nodes in any order (perhaps adversarial), and for every node makes a local decision, then given a legal $c$-coloring of the graph, the algorithm can be adapted to the distributed setting by going over all color classes, and for each executing all nodes in the class simultaneously. Surprisingly, there is no formal framework describing the above. In this paper we provide such a framework for a specific class of algorithms (defined later). 

We note that for general graphs a legal coloring may require at least $\Delta+1$ colors, where $\Delta$ is the maximal degree of the graph. Using the above framework we aim to answer the following question: Are there certain classes of algorithms where using the above can result in a running time sublinear in $\Delta$? We show that for certain approximation problems the answer is quite surprising, as we are able to achieve an almost constant running time! 

More precisely, we show that for the problems of Max $k$-Cut, Max-DiCut, Max 2-SAT and correlation clustering we can adapt the sequential algorithm to these problems in such a way that the running time is $O(\log^* n)$ rounds for deterministic algorithms and $O(\epsilon^{2})$ for randomized ones, while losing only an additive $\epsilon$-factor in the approximation ratio. For the problems of Max-Cut and Max-DiCut this greatly improves upon the previous best known results, which required a number of rounds linear in $\Delta$. A summary of our results appears in Table~\ref{tab: results}.

\begin{table}[htbp]
	\label{tbl: results}
	\centering
	\begin{tabular*}{\linewidth}{@{}l@{\extracolsep{\fill}}c@{}c@{}c@{}c@{}r@{}}
		\toprule
		Problem       &  Our Approx.   & Our Time &  Prev Approx.   & Prev Time & Notes \\
		\midrule
		
		Weighted Correlation-Clustering*    &
		$1/2-\epsilon$&
		$O(\log^* n)$ &
		-&
		- &
		det.\\

		Weighted Max $k$-Cut    &
		$1-1/k-\epsilon$&
		$O(\log^* n)$ &
		$1/2$\cite{Censor-HillelLS17}\textsuperscript{$\dagger$}&
		$\tilde{O}(\Delta + \log^* n)$ &
		det.\\

		Weighted Max-Dicut    &
		$1/3-\epsilon$&
		$O( \log^* n)$  &
		$1/3$\cite{Censor-HillelLS17}\textsuperscript{$\dagger$}&
		$\tilde{O}(\Delta + \log^* n)$ &
		det.\\
		
		Weighted Max-Dicut    &
		$1/2-\epsilon$&
		$O(\epsilon^{-1})$  &
		$1/2$ \cite{Censor-HillelLS17}\textsuperscript{$\dagger$}&
		$\tilde{O}(\Delta + \log^* n)$ &
		rand.\\

		Weighted Max 2-SAT    &
		$3/4-\epsilon$&
		$O(\epsilon^{-1})$  &
		-&
		- &
		rand.\\

		\bottomrule
	\end{tabular*}
	\caption{Summary of our results for the $\mathsf{CONGEST}$ model ($\tilde{O}$ hides factors polylogarithmic in $\Delta$).
		(*) General graphs, max-agree
		($\dagger$) Unweighted graphs, only Max-Cut ($k=2$).
	}
	\label{tab: results}
\end{table}


\subsection{Tools and results}
In this paper we focus our attention on approximation algorithms for unconstrained optimization problems on graphs. We are given some graph $G(V,E)$, where each vertex $v$ is assigned a variable $X_v$ taking values in some set $A$.
We aim to maximize some utility function $f$ over these variables (For a formal definition see Section~\ref{sec: preliminaries}). Our distributed model is the CONGEST model of distributed computation, where the network is represented by a graph, s.t nodes are computational units and edges are communication links. Nodes communicate in synchronous communication rounds, where at each round a node sends and receives messages from all of its neighbors.
In the CONGEST model the size of messages sent between nodes is limited to $O(\log n)$ bits, where $\size{V}=n$. This is more restrictive than the LOCAL model, where message size is unbounded. Our complexity measure is the number of communication rounds of the algorithm.

Adapting a sequential algorithm of the type we describe above to the distributed setting, means we wish each node $v$ in the communication graph to output an assignment to $X_v$ such that the approximation guarantee is close to that of the sequential algorithm, while minimizing the number of communication rounds of the distributed algorithm.
Our goal is to formally define a family of sequential algorithms which can be easily converted to distributed algorithms, and then develop tools to allow these algorithms to run exponentially faster, while achieving almost the same approximation ratio. To achieve this we focus our attention on a family of sequential algorithms which exhibit a very strong local nature.



We define a family of utility functions, which we call \emph{local utility functions} (Formally defined in Section~\ref{sec: preliminaries}). We say that a utility function $f$ is a local utility function, if the change to the value of the function upon setting one variable $X_v$ can be computed locally.
Intuitively, while optimizing a general utility function in the distributed setting might be difficult for global functions, the local nature of the family of local utility functions makes it a perfect candidate.



We focus on adapting a large family of, potentially randomized, local algorithms to the distributed setting. We consider \emph{orderless-local} algorithms - algorithms that can traverse the variables in any order and in each iteration apply some local function to decide the value of the variable. By local we mean that the decision only depends on the local environment of the node in the graph, the variables of nodes adjacent to that variable and some randomness only used by that node. This is similar to the family of Priority algorithms first defined in \cite{BorodinNR02}. The goal of \cite{BorodinNR02} was to formally define the notion of a greedy algorithm, and then to explore the limits of these algorithms. Our definition is similar (and can be expressed as a special case of priority algorithms), but the goal is different. While \cite{BorodinNR02} aims to prove lower bounds, we provide some sufficient conditions that allow us to easily transform local sequential algorithms into fast distributed algorithms.

Our definitions are also similar to the SLOCAL model \cite{GhaffariKM17}, which also shows that sequential algorithms which traverse the graph vertices in any order and make local decisions can be adapted to the distributed LOCAL model in poly logarithmic rounds using randomization. While the results of \cite{GhaffariKM17} are much more broad, our transformation does not require any randomization and works in the CONGEST model. Finally, we should also mention the field of local computation algorithms \cite{RubinfeldTVX11} whose aim is developing efficient local \emph{sequential} algorithms. We refer the reader to an excellent survey by Levi and Medina \cite{levi2017centralized}.

One might expect that due to the locality of this family of algorithms it can be distributed if the graph is provided with a legal coloring. The distributed algorithm goes over the color classes one after another and executes all nodes in the color class simultaneously. This solves any conflicts that may occur form executing two neighboring nodes, while the orderless property guarantees that this execution is valid. In a sense this argument was already used for specific algorithm (Coloring to MIS \cite{Linial92}, MaxIS of \cite{Bar-YehudaCGS17}, Max-Cut of \cite{Censor-HillelLS17}). We provide a more general result, using this classical argument. Specifically, we show that given a legal $c$-coloring, any orderless-local algorithm can be distributed in $O(c)$-communication rounds in the CONGEST model.

To show that this definition is indeed robust, we show two general applications. The first is adapting the method of conditional expectations (Formally defined in Section~\ref{sec: preliminaries}) to the distributed setting. This method is inherently sequential, but we show that if the utility function optimized is a local utility function, then the algorithm is an orderless-local algorithm. A classical application of this technique is for Max $k$-cut, where an $(1-1/k)$-approximation is achieved when every node chooses a cut side at random. This can be derandomized using the method of conditional expectations, and adapted to the distributed setting, as the cut function is a local utility function. We note that the same exact approach results in a $(1/2-\epsilon)$-approximation for max-agree correlation clustering on general graphs (see Section~\ref{sec: preliminaries} for a definition). Because the tools used for Max-Cut directly translate to correlation clustering, we focus on Max-Cut for the rest of the paper, and only mention correlation clustering at the very end.

The second application is the unconstrained submodular maximization algorithms of \cite{BuchbinderFNS15}, where a deterministic 1/3-approximation and a randomized expected 1/2-approximation algorithms are presented. We show that both are orderless-local algorithms when provided with a local utility function. This can be applied to the problem of Max-DiCut, as it is an unconstrained submodular function, and also a local utility function. The algorithms of \cite{BuchbinderFNS15} were already adapted to the distributed setting for the specific problem of Max-DiCut by \cite{Censor-HillelLS17} using similar ideas. The main benefit of our definition is the convenience and generality of adapting these algorithms without the need to consider their analysis or reprove correctness.
We conclude that the family of orderless-local algorithms indeed contains robust algorithms for fundamental problems, and especially the method of conditional expectations. 

At the time this paper was first made public, there was no distributed equivalent for the method of conditional expectations. We have since learned that, independently and simultaneously, an adaptation of the method of conditional expectations to the distributed setting was also presented in \cite{MohsenDerand}. Their results show how the method of conditional expectations combined with a \emph{legal} coloring can be used to convert any randomized LOCAL $r$-round algorithm for a locally checkable problem to a deterministic one, running in $O(\Delta^{O(r)} + O(r\log^* n))$.\footnote{They actually show that the running time is either $O(\Delta^{O(r)} + O(r\log^* n))$ or $r\cdot2^{O(\sqrt{log n})}$, achieving the latter via network decomposition. We focus on the first bound, as the second is less relevant for the comparison which follows.} This is done via a transformation to an SLOCAL algorithm, where the derandomization is applied and then transforming back to a LOCAL algorithm. 

Although not stated for the CONGEST model, we believe it to be the case that when $r=1$ their application of the method of conditional expectations works in the CONGEST, and is equivalent to our results. Another difference apart from the different model of communication, is that they focus on derandomizing locally checkable problems, while we focus on local utility functions. These two families of problems are different, as the approximation guaranteed for a certain local utility function need not be locally checkable. This last point highlights the different goal of the two papers. While \cite{MohsenDerand} skillfully show that a large family of LOCAL algorithm can be derandomized, we aim to adapt sequential algorithm to the distributed setting while achieving as \emph{fast} of a running time as possible in the more restrictive CONGEST model -- hence we focus on local utility function which capture the locality of the optimization process.

Next, we wish to consider the running time of these algorithms. Recall that we expressed the running time of orderless local algorithms in terms of the colors of some legal coloring for the graph. For a general graph, we cannot hope for a legal coloring using less than $\Delta+1$, where $\Delta$ is the maximum degree in the graph. This means that using the distributed version of an orderless-local algorithm unchanged will have a running time linear in $\Delta$. We show how to overcome this obstacle for Max $k$-Cut and Max-DiCut. The general idea is to compute a \emph{defective} coloring of the graph which uses few colors, drop all monochromatic edges, and call the algorithm for the new graph which now has a legal coloring.

A key tool in our algorithms is a new type of defective coloring we call a \emph{weighted $\epsilon$-defective coloring}. The classical defective coloring allows each vertex to have at most $d$ monochromatic edges, for some defect parameter $d$. We consider positively edge weighted graphs and require a weighted fractional defect - for every vertex the total weight of monochromatic edges is at most an $\epsilon$-fraction of the total weight of all edges of that vertex. We show that a weighted $\epsilon$-defective coloring using $O(\epsilon^{-2})$ colors can be computed \emph{deterministically} in $O(\log^*n)$ rounds using the defective coloring algorithm of \cite{Kuhn09}. The classical algorithm of Kuhn was found useful in the adaptation of sequential algorithms to the distributed setting \cite{FischerGK17, GhaffariKM17}, thus its effectiveness for weighted $\epsilon$-defective coloring might be of further use.

Although we cannot guarantee a legal coloring with a small number of colors for any graph $G(V,E,w)$, we may remove some subset of $E$ which will result in a new graph $G'$ with a low chromatic number. We wish to do so while not decreasing the total sum of edge weights in $G'$, which we prove guarantees the approximation will only be mildly affected for our cut problems.
Formally, we show that if we only decrease the total edge weight by an $\epsilon$-fraction, we will incur an additive $\epsilon$-loss in the approximation ratio of the cut algorithms for $G$.
For the randomized algorithm this is easy, simply color each vertex randomly with a color in $[\ceil{\epsilon^{-1}}]$ and drop all monochromatic edges. For the deterministic case, we execute our weighted $\epsilon$-defective coloring algorithm, and then remove all monochromatic edges. We then execute the relevant cut algorithm on the resulting graph $G'$ which now has a legal coloring, using a small number of colors. The above results in extremely fast approximation algorithms for weighted Max $k$-Cut and weighted Max-DiCut, while having almost the same approximation ratio as their sequential counterpart.

Finally, our techniques can also be applied to the problem of weighted Max 2-SAT. To do so we may use the randomized expected 3/4-approximation algorithm presented in \cite{PoloczekSWZ17}. It is based on the algorithm of \cite{BuchbinderFNS15}, and thus is almost identical to the unconstrained submodular maximization algorithm. Because the techniques we use are very similar to the above, we defer the entire proof to the appendix.

\subsection{Previous research}
\textbf{Cut problems:} An excellent overview of the Max-Cut and Max-DiCut problems appears in \cite{Censor-HillelLS17}, which we follow in this section. Computing Max-Cut exactly is NP-hard as shown by Karp \cite{Karp72} for the weighted version, and by \cite{GareyJS76} for the unweighted case. As for approximations, it is impossible to improve upon a 16/17-approximation for Max-Cut and a 12/13-approximation for Max-DiCut unless $P=NP$ \cite{TrevisanSSW00, Hastad01}. If every node chooses a cut side randomly, an expected 1/2-approximation for Max-Cut, a 1/4-approximation for Max-DiCut and a $(1-1/k)$-approximation is achieved. This can be derandomized using the method of conditional expectations. In the breakthrough paper of Goemans and Williamson \cite{GoemansW95} a 0.878-approximation is achieved using semidefinite programming. This is optimal under the unique games conjecture \cite{KhotKMO07}. In the same paper a 0.796-approximation for Max-DiCut was presented. This was later improved to 0.863 in [MatuuraM01]. Other results using different techniques are presented in \cite{KaleS11, Trevisan12}.

In the distributed setting the problem has not received much attention. A node may choose a cut side at random, achieving the same guarantees as above in constant time. In \cite{HirvonenRSS14} a distributed algorithm for $d$-regular triangle free graphs which achieves a $(1/2 + 0.28125/\sqrt{d})$-approximation ratio in a single communication round is presented. The only results for  general graphs in the distributed setting is due to \cite{Censor-HillelLS17}. In the CONGEST model they present a deterministic 1/2-approximation for Max-Cut, a deterministic 1/3-approximation for Max-DiCut, and a randomized expected 1/2 approximation for Max-DiCut running in $\tilde{O}(\Delta + \log^* n)$ communication rounds. The results for Max-DiCut follow from adapting the unconstrained submodular maximization algorithm of \cite{BuchbinderFNS15} to the distributed setting. Better results are presented for the LOCAL model; we refer the reader to \cite{Censor-HillelLS17} for the full details.

\noindent \textbf{Max 2-SAT:}
The decision version of Max 2-SAT is NP-complete~\cite{GareyJS76}, and there exist several approximation algorithms~\cite{GoemansW95,FeigeG95,LewinLZ02,MatuuraM01}, of which currently the best known approximation ratio is 0.9401 \cite{LewinLZ02}. In \cite{Austrin07} it is shown that assuming the unique games conjecture, the approximation factor of \cite{LewinLZ02} cannot be improved. Assuming only that $P\neq NP$ it cannot be approximated to within a 21/22-factor \cite{Hastad01}. To the best of our knowledge the problem of Max 2-SAT (or Max-SAT) was not studied in the distributed model.

\noindent \textbf{Correlation clustering:}
An excellent overview of correlation clustering (see Section~\ref{sec: preliminaries} for a definition) appears in \cite{AhnCGMW15}, which we follow in this section.
Correlation clustering was first defined by \cite{BansalBC02}. Solving the problem exactly is NP-Hard, thus we are left with designing approximation algorithms for the problem, here one can try to approximate \emph{max-agree} or \emph{min-disagree}. If the graph is a clique, there exists a PTAS for max-agree \cite{BansalBC02,GiotisG06}, and a 2.06-approximation for max-disagree \cite{ChawlaMSY15}. For general (even weighted) graphs there exists a 0.7666-approximation for max-agree \cite{CharikarGW05, Swamy04}, and a $O(\log n)$-approximation for min-disagree \cite{DemaineEFI06}. A trivial 1/2-approximation for max-agree on general graphs can be achieved by considering putting every node in a separate cluster, then considering putting all nodes in a single cluster, and taking the more profitable of the two.

In the distributed setting little is known about correlation clustering. In \cite{Censor-HillelHK16} a dynamic distributed MIS algorithm is provided, it is stated that this achieves a 3-approximation for min-disagree correlation clustering as it simulates the classical algorithm of Ailon \etal \cite{AilonCN08}. We note that the algorithm of Ailon \etal assumes the graph to be a clique, thus the above result is limited to complete graphs where the edges of the communication graph are taken to be the positive edges, and the non-edges are taken as the negative edges (as indeed for general graphs, the problem is APX-Hard, and difficult to approximate better than $\Theta(\log n)$ \cite{DemaineEFI06}). We also note that using only two clusters, where each node chooses a cluster at random, guarantees an expected 1/2-approximation for max-agree on weighted general graphs. We derandomize this approach in this paper.

\section{Preliminaries}
\label{sec: preliminaries}
\noindent \textbf{Sequential algorithms:}
The main goal of this paper is converting (local) sequential graph algorithms for unconstrained maximization (or minimization) to distributed graph algorithms. Let us first define formally this family of algorithms. The sequential algorithm receives as input a graph $G=(V,E)$, we associate each vertex $v\in V$ with a variable $X_v$ taking values in some finite set $A$. The algorithm outputs a set of assignments $\Xbar=\set{X_v = \alpha_v}$. The goal of the algorithms is to maximize some utility function $f(G,\Xbar)$ taking in a graph and the set of assignments and outputting some value in $\mathbb{R}$. For simplicity we assume that the order of the variables in $\Xbar$ does not affect $f$, so we use a set notation instead of a vector notation. We somewhat abuse notation, and when assigning a variable we write $\Xbar \cup \set{X_v=\alpha}$, meaning that any other assignment to $X_v$ is removed from the set $\Xbar$. We also omit $G$ as a parameter when it is clear from context.

When considering randomized algorithms we assume the algorithm takes in a vector of random bits denoted by $\vec{r}$. This way of representing random algorithms is identical to having the algorithm generate random coins, and we use these two definitions interchangeably. The randomized algorithm aims to maximize the expectation of $f$, where the expectation is taken over the random bits of the algorithm.

\noindent \textbf{Max $k$-Cut, Max-DiCut: }
In this paper we provide fast distributed approximation algorithms to some fundamental problems, which we now define formally. In the Max $k$-Cut problem we wish to divide the vertices into $k$ disjoint sets, such that the weight of edges between different sets is maximized. In the Max-DiCut problem the edges are directed and we wish to divide the edges into two disjoint sets, denoted $A,B$, such that the weight of edges directed from $A$ to $B$ is maximized.

\noindent \textbf{Max 2-SAT: } In the Max 2-SAT problem we are given a set of unique weighted clauses over some set of variables, where each clause contains \emph{at most} two literals. Our goal is to maximize the weight of satisfied clauses. This problem is more general than the cut problems, so we must define what it means in the distributed context. First, the variables will be node variables as defined before. Second, each node knows all of the clauses it appears in as a literal.

\noindent \textbf{Correlation clustering: }
We are given an edge weighted graph $G(V,E,w)$, such that each edge is also assigned a value from $\set{+,-}$ (referred to positive and negative edges).
Given some partition, $C$, of the graph into disjoint clusters, we say that an edge \emph{agrees} with $C$ if it is positive and both endpoints are in the same cluster, or it is negative, and its endpoints are in different clusters. Otherwise we say it \emph{disagrees} with $C$. We aim to find a partition $C$, using any number of clusters, such that the weight of edges that agree with $C$ (agreements) is maximized (max-agree), or equivalently the weight of edges that disagree with $C$ is minimized (min-disagree).

The problem is usually expressed as an LP using edge variables, where each variable indicates whether the nodes are in the same cluster. This allows a solution to use any number of clusters. In this paper we only aim to achieve a $(1/2 - \epsilon)$-approximation for the problem. This can be done rather simply without employing the full power of correlation clustering. Specifically, two clusters are enough for our case as we show that we can deterministically achieve $(1/2-\epsilon)\size{E}$ agreements which results in the desired approximation ratio.

\noindent \textbf{Local utility functions: }
We are interested in a type of utility function which we call a \emph{local utility function}. Before we continue with the definition let us define an operator on assignments $\Xbar$, we define $L_v[\Xbar] = \set{\set{X_u=\alpha_u} \in \Xbar \mid u\in N(v)}$. For convenience, when we pass $\Xnv$ as parameter to a function, we assume that the function also receives the 1-hop neighborhood of $v$ which we do not write explicitly. We say that a utility function $f$, as defined above, is a local utility function if for every $v$ there exists a function $g_v$ s.t $f(\Xbar\cup \set{X_v =\alpha}) - f(\Xbar\cup \set{X_v =\alpha'}) = g_v(L_v[X], \alpha, \alpha')$.
That is, to compute the change in the utility function which is caused by changing $X_v$ from $\alpha'$ to $\alpha$, we only need to know the immediate neighborhood of $v$, and the assignment to neighboring node variables. We note that for the cut problems considered in this paper the utility functions are indeed local utility functions. This is proven in the following Lemma:

\begin{lemma}
\label{lem: cuts local}
\LemCutsLocal
\end{lemma}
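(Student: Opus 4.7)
The plan is to show, for each of the three utility functions in turn, that the value $f(\Xbar)$ decomposes as a sum over edges whose per-edge contribution depends only on the assignments of the two endpoints of that edge. Once such a decomposition is exhibited, the difference $f(\Xbar \cup \{X_v = \alpha\}) - f(\Xbar \cup \{X_v = \alpha'\})$ collapses, since only the summands corresponding to edges incident to $v$ change, and each of those summands depends only on $X_v$ and on $X_u$ for $u \in N(v)$. This already exhibits the required function $g_v$ and hence locality.

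Concretely, for Max $k$-Cut on a weighted graph $G(V,E,w)$ with $X_v \in [k]$, I would write
\[
f(\Xbar) \;=\; \sum_{(u,w) \in E} w(u,w)\cdot \mathbb{1}[X_u \neq X_w],
\]
so that $f(\Xbar \cup \{X_v=\alpha\}) - f(\Xbar \cup \{X_v=\alpha'\})$ equals
\[
\sum_{u \in N(v)} w(u,v)\bigl(\mathbb{1}[X_u \neq \alpha] - \mathbb{1}[X_u \neq \alpha']\bigr),
\]
which is a function only of $L_v[\Xbar]$, $\alpha$ and $\alpha'$, and hence defines the desired $g_v$. For Max-DiCut with $X_v \in \{A,B\}$ and directed edges I would use the analogous decomposition
\[
f(\Xbar) \;=\; \sum_{(u,w) \in E} w(u,w)\cdot \mathbb{1}[X_u=A,\, X_w=B],
\]
splitting the incident edges at $v$ into out-edges and in-edges; the difference again involves only $v$'s neighbors and their assignments, so locality follows.

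For max-agree correlation clustering with $2$ clusters and $X_v \in \{1,2\}$, I would separate positive and negative edges and write
\[
f(\Xbar) \;=\; \sum_{(u,w) \in E^+} w(u,w)\cdot \mathbb{1}[X_u=X_w] \;+\; \sum_{(u,w)\in E^-} w(u,w)\cdot \mathbb{1}[X_u\neq X_w],
\]
and repeat the same argument: toggling $X_v$ only changes summands for edges incident to $v$, and each such summand is determined by $X_v$ and the corresponding neighbor's assignment. Since there is essentially no obstacle here beyond verifying the edge-decomposition for each problem, the only thing to be careful about is bookkeeping for Max-DiCut, where the contribution of an edge is not symmetric in its endpoints and one must keep track of edge orientation when expressing $g_v$; this is the only place where a small amount of care is needed, but it amounts to summing two separate indicator differences over in-neighbors and out-neighbors.
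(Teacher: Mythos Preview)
Your proposal is correct and follows essentially the same approach as the paper: write each utility function as an edgewise sum, then observe that toggling $X_v$ only affects the summands for edges incident to $v$, which depend solely on $L_v[\Xbar]$, $\alpha$, and $\alpha'$. The paper uses the notation $X_w \oplus X_u$ in place of your indicators and, like you, treats Max-DiCut and correlation clustering by exhibiting the analogous edge-decomposition and stating that the same computation goes through.
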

\begin{proof}
	The utility functions for Max $k$-Cut is given by $f(\Xbar) = \sum_{e=(w,u)\in E} w(e) \cdot X_w \oplus X_u$ where $X_w \oplus X_u = 0$ if $X_w = X_u$ and 1 otherwise. Thus, if we fix some $v$ it holds that
	\begin{align*}
	&f(\Xbar \cup \set{X_v = \alpha'}) - f(\Xbar \cup \set{X_v = \alpha})
	\\ &= \sum_{e=(v,u)\in E} w(e) \cdot \alpha' \oplus X_u - \sum_{e=(v,u)\in E} w(e) \cdot \alpha \oplus X_u
	\\&=  \sum_{e=(v,u)\in E} w(e) \cdot ( \alpha' \oplus X_u - \alpha \oplus X_u) \triangleq g_v(L_v[\Xbar], \alpha', \alpha)
	\end{align*}
	
	Because the final sum only depends on vertices $u\in N(v)$, the last equality defines the local function equivalent to the difference, and we are done.
	
	For the problem of Max-DiCut the utility functions is given by $f(\Xbar) = \sum_{e=(v\rightarrow u)\in E} w(e) \cdot X_v \wedge (1-X_u)$, and for max-agree correlation clustering with 2 clusters the utility function is given by $f(\Xbar) = \sum_{e=(v,u)\in E^+ }  w(e) \cdot(1-X_v \oplus X_u) + \sum_{e=(v,u)\in E^- }  w(e) \cdot X_v \oplus X_u$ ($E^+, E^-$ are the positive and negative edges, respectively), and the proof is exactly the same.
\end{proof}

\noindent \textbf{Submodular functions: } 
A family of functions that will be of interest in this paper is the family of \emph{submodular functions}. A function $f:{0,1}^\Omega \rightarrow \mathbb{R}$ is called a set function, with ground set $\Omega$. It is said to be submodular if for every $S,T \subseteq \Omega$ it holds that $f(S) + f(T) \geq f(S\cup T) + f(S \cap T)$. The functions we are interested in have $V$ as their ground set, thus we remain with our original notation, setting $A=\set{0,1}$ and having $f$ take in a set of binary assignments $\Xbar$ as a parameter.


\noindent \textbf{The method of conditional expectations: }
Next, we consider the method of conditional expectations. Let $A$ be some set and $f: A^n \rightarrow \mathbb{R}$, next let $\Xbar = (X_1,...,X_n)$ be a vector of random variables taking values in $A$. We wish to be consistent with the previous notation, thus we treat $\Xbar$ as a set of assignments.
If $E[f(\Xbar)] \geq \beta$, then there is an assignment of values $\overline{Z}=\set{X_i=\alpha_i}_{i=1}^n$ such that $f(\overline{Z}) \geq \beta$. We describe how to \emph{find} the vector $Z$.
We first note that from the law of total expectation it holds that
$E[f(\Xbar)] = \sum_{\alpha \in A} E[f(\Xbar) \mid X_1 = \alpha]Pr[X_1=\alpha]$, and therefore for at least some $\alpha \in A$ it holds that $E[f(\Xbar) \mid X_1 = \alpha] \geq \beta$. We set this value to be $\alpha_1$. We then repeat this process for the rest of the values in $\Xbar$, which results in the set $\overline{Z}$. In order for this method to work we need it to be possible to \emph{compute}\footnote{This point is critical, and this computation is not simple in many cases. In our case we also need this computation to be done \emph{locally} at every nodes. We apply this technique to Max-Cut, which meets all of these demands.} the conditional expectation of $f(\Xbar)$.


\noindent \textbf{Graph coloring: }
A $c$-coloring for $G(V,E)$ is defined as a function $\varphi:V \rightarrow \mathcal{C}$. For simplicity we treat any set $\mathcal{C}$ of size $c$ with some ordering as the set of integers $[c]$. This simplifies things as we can always consider $\varphi(v) \pm 1$, which is very convenient.
We say that a coloring is a legal coloring if $\forall v,u$ s.t $(v,u)\in E$ it holds that $\varphi(v) \neq \varphi(u)$.
An important tool in this paper is \emph{defective coloring}. Let us fix some $c$-coloring function $\varphi:V \rightarrow [c]$.
We define the defect of a vertex to be the number of monochromatic edges it has. Formally, $defect(v)=size{\set{u\in N(v) \mid \varphi(v)=\varphi(u)}}$. We call $\varphi$ a $c$-coloring with defect $d$ if it holds that $\forall v\in V, defect(v) \leq d$. A classic result by Kuhn \cite{Kuhn09} states that for all $d \in \set{1,2,...,\Delta}$ an $O(\Delta ^2 / d^2)$-coloring with defect $d$ can be computed deterministically in $O(\log^* n)$ rounds in the CONGEST model.

In this paper we define a new kind of defective coloring which we call a \emph{weighted $\epsilon$-defective coloring}. Given a positively edge weighted graph and any coloring, for every vertex we denote by $E_m(v)$ its monochromatic edges. Define its weighted defect as $defect_w(v) = \sum_{e=(u,v)\in E_m(v)} w(e)$. We aim to find a coloring s.t the defect for every $v$ is below $\epsilon w(v) = \epsilon \sum_{v\in e} w(e)$.
We show that the algorithm of Kuhn actually computes a weighted $\epsilon$-defective $O(\epsilon^{-2})$-coloring. We state the following theorem (As the analysis is rather similar to the original analysis of Kuhn, the proof is deferred to the appendix):

\begin{theorem}
\label{thm: eps-coloring}
\ThmKuhnExt
\end{theorem}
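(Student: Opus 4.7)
The plan is to extend Kuhn's deterministic defective coloring algorithm to the weighted setting, following the paper's own hint that ``the algorithm of Kuhn actually computes a weighted $\epsilon$-defective $O(\epsilon^{-2})$-coloring.'' I recall the structure of Kuhn's algorithm: starting from the $n$-coloring given by IDs, iterate Linial-style color reductions $O(\log^* n)$ times, each round collapsing the palette while preserving properness, ending with a legal coloring on $M=\mathrm{poly}(\Delta)$ colors; then a single ``defective'' reduction step produces an $O(\Delta^2/d^2)$-coloring with defect $d$. Each step is deterministic and uses only $O(\log n)$-bit messages, so the whole scheme lives in the CONGEST model.

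The only step that introduces defect is the final reduction, which is driven by an existential combinatorial lemma: for appropriate $M,\Delta,d$ there is a function $f:[M]\to[C]$, $C=O(\Delta^2/d^2)$, such that, for every possible local view, applying $f$ produces at most $d$ monochromatic neighbors. My main task is to replace this lemma with a weighted analogue: the existence of $f:[M]\to[C]$ with $C=O(\epsilon^{-2})$ such that for every vertex $v$ with legal old color $c_v$ and neighbor old colors $c_1,\ldots,c_{\deg(v)}$ carrying edge weights $w_1,\ldots,w_{\deg(v)}$,
\[
  \sum_{i\,:\, f(c_i)=f(c_v)} w_i \;\leq\; \epsilon\sum_i w_i.
\]
Given such an $f$, the algorithm is immediate: hard-wire $f$ into every node, run the $O(\log^* n)$ Linial rounds to reach a legal $M$-coloring, and then have every node broadcast its old color to its neighbors and adopt $f(c_v)$ in one additional round.

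I would establish the weighted lemma by the probabilistic method. For a uniformly random $f:[M]\to[C]$ and a fixed vertex $v$, the weighted defect $Y_v=\sum_i w_i Z_i$, with $Z_i=\mathbf{1}[f(c_i)=f(c_v)]$ independent Bernoulli$(1/C)$ variables, has expectation $w(v)/C=O(\epsilon^2)\,w(v)$. A Bernstein-type tail bound, using that each $w_i\le w(v)$ and $\sum_i w_i=w(v)$, then bounds the probability that $Y_v$ exceeds $\epsilon\,w(v)$. Union-bounding over all possible local views (whose number is controlled by $M$ and $\Delta$, not by $n$) gives a single $f$ satisfying the weighted requirement universally; since $f$ depends only on $(M,C,\Delta,\epsilon)$, it can be chosen offline.

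The step I expect to be the hardest is the concentration bound, because the weights $w_i$ can be extremely skewed: a single edge may carry nearly all of $w(v)$, which cripples naive Chernoff. I would handle this by partitioning each vertex's incident edges into weight classes by doubling levels $(2^{-j-1}w(v),\,2^{-j}w(v)]$, so that within each class the weights are comparable and the number of edges is bounded in terms of $j$; applying Bernstein per class and summing via a geometric series should yield the desired bound. The restriction $\epsilon<1/e$ provides the slack needed to absorb the constant factors produced by this per-class aggregation, while keeping the final palette at $C=O(\epsilon^{-2})$.
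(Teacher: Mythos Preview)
Your approach has a fundamental gap: a \emph{fixed}, weight-oblivious recoloring map $f:[M]\to[C]$ with $C<M$ cannot satisfy the weighted defect condition for all local views. Take any such $f$; since $C<M$ there exist distinct $a,b\in[M]$ with $f(a)=f(b)$. A vertex $v$ of degree one with (legal) old color $a$ and neighbor color $b$ then has its unique edge become monochromatic, so its weighted defect equals $w(v)$, violating the bound $\epsilon\,w(v)$ for every $\epsilon<1$. More generally, your union bound over ``all possible local views'' cannot be carried out: a local view includes the real-valued weight vector, so there are uncountably many of them, and no concentration inequality can salvage this. The difficulty you anticipated with skewed weights is not a technicality to be handled by doubling classes---it is a witness that the object you are trying to construct does not exist.

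The paper's argument (and Kuhn's original) does not use a fixed recoloring map. Each old color $x$ is assigned a function $\phi_x:A\to B$ from a polynomial family with pairwise intersection at most $k$, and each vertex $v$ \emph{adaptively chooses} its own $\alpha\in A$ minimizing $\sum_{e=(v,u)\in E_b(v)} w(e)\cdot\chi_{\phi_v(\alpha)=\phi_u(\alpha)}$, then outputs the color $(\alpha,\phi_v(\alpha))$. The existence of a good $\alpha$ is shown by averaging, not concentration: summing the conflicting weight over all $\alpha\in A$ gives at most $k\sum_{e\in E_b(v)} w(e)$, so some $\alpha$ incurs at most $(k/|A|)\sum_{e\in E_b(v)} w(e)\le\epsilon\sum_{e\in E_b(v)} w(e)$ once $|A|>k\epsilon^{-1}$. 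This is exactly Kuhn's Lemma~4.1 with ``number of conflicting neighbors'' replaced by ``weight of conflicting edges,'' and the substitution is free because the double-counting never uses that edge contributions are $\{0,1\}$-valued. The algorithm then iterates this reduction $T=O(\log^* n)$ times starting from the ID coloring, with parameters $\epsilon_i=\epsilon_0/2^{T-i}$ so that $\sum_i\epsilon_i\le\epsilon$; your recollection that one first reaches a proper $\mathrm{poly}(\Delta)$-coloring via Linial and then performs a single defective step is not how Kuhn's scheme is structured.
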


\section{Orderless-local algorithms}

Next we turn our attention to a large family of (potentially randomized) greedy algorithms. We limit ourselves to graph algorithms s.t every node $v$ has a variable $X_v$ taking values in some set $A$. We aim to maximize some global utility function $f(\Xbar)$. We focus on a class of algorithms we call \emph{orderless-local} algorithms. These are greedy-like algorithms which may traverse the vertices in any order, and at each step decide upon a value for $X_v$. This decision is local, meaning that it only depends on the 1-hop topology of $v$ and the values of neighboring variables. The decision may be random, but each variable has its own random bits, keeping the decision process local.

The code for a generic algorithm of this family is given in Algorithm~\ref{alg: orderless-local}. The algorithm first initiates the vertex variables.
Next it traverses the variables in some order $\pi:V \rightarrow [n]$. Each $X_{v_i}$ is assigned a value according to some function $decide$, which only depends on $L_v[\Xbar]$ at the time of the assignment and some random bits $\vec{r}_i$ which are only used to set the value for that variable. Finally the assignment to the variables is returned. We are guaranteed that the expected value of $f$ is at least $\beta(G)$ for any, potentially adversarial, ordering $\pi$ of the variables. Formally, $E_{\vec{r}}[f(OL(G, \vec{r}, \pi))] \geq \beta(G)$.

\RestyleAlgo{boxruled}
\LinesNumbered
\begin{algorithm}[htbp]
	
	\caption{\texttt{OL($G, \vec{r}, \pi$)}}
	\label{alg: orderless-local}
	$\forall v \in V, X_v = init(\Xnv)$ \\
	Order the variables according to $\pi$: $v_1, v_2...,v_n$\\
	\DontPrintSemicolon
	\lFor{$i$ from 1 to $n$ }
	{
		$X_{v_i} = decide(\Xnv, r_i)$
	}
	Return $\Xbar$\\
\end{algorithm}

We show that this family of algorithms can be easily distributed using coloring, s.t the running time of the distributed version depends on the number of colors. The distributed version, \texttt{OLDist}, is presented as Algorithm~\ref{alg: orderless-local-dist}. The variables are all initiated as in the sequential version, and then the color classes are executed sequentially, while in each color class the nodes execute $decide$ simultaneously, and send the newly assigned value to all neighbors. Decide does not communicate with the neighbors, so the algorithm finishes in $O(c)$ rounds.
\RestyleAlgo{boxruled}
\LinesNumbered
\begin{algorithm}[htbp]
	\caption{\texttt{OLDist($G, \vec{r}, \varphi$)}}
	\label{alg: orderless-local-dist}

	$\forall v \in V, X_v = init(\Xnv)$ \\
	\For{$i$ from 1 to $c$ }
	{
		\ForEach{$v$ s.t $\varphi(v)=i$ simultaneously}
		{
			$X_{v_i} = decide(\Xnv, r_i)$\\
			Send $X_{v_i}$ to neighbors\\
		}
	}
	return $\Xbar$\\
\end{algorithm}

It is easy to see that given the same randomness both the sequential and distributed algorithms output the same result, this is because all decisions of the distributed algorithm only depend on the 1-hop environment of a vertex, and we are provided with a \emph{legal} coloring. Thus, one round of the distributed algorithm is equivalent to many steps of the sequential algorithm. We prove the following lemma:

\begin{lemma}
\label{lem: exists ordering}
For any graph $G$ with a legal coloring $\varphi$, there exists an order $\pi$ on the variables s.t it holds that $OL(G,\vec{r}, \pi) = OLDist(G,\vec{r},\pi)$ for any $\vec{r}$.
\end{lemma}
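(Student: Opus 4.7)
\medskip

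\noindent\textbf{Proof plan.} The plan is to exhibit an explicit ordering $\pi$ that makes the sequential execution mimic the round-by-round behavior of \texttt{OLDist}, and then to verify equality of intermediate states by induction on the color class. Concretely, I would pick $\pi$ to be any ordering that lists all vertices of color $1$ first (in some arbitrary internal order), then all vertices of color $2$, and so on up to color $c$. The point of this choice is that it interleaves with \texttt{OLDist} in the obvious way: the $i$-th outer iteration of \texttt{OLDist} corresponds to the contiguous block of $\pi$ containing exactly the vertices of color $i$.

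Next I would fix any $\vec{r}$ and argue by induction on $i \in \{0,1,\dots,c\}$ that, after \texttt{OL} has processed all vertices in the first $i$ color classes and after \texttt{OLDist} has completed its first $i$ outer rounds, the two algorithms hold identical assignments $\Xbar$. The base case $i=0$ is immediate, since both algorithms start from the same initialization $X_v = init(L_v[\Xbar])$, which depends only on the (fixed) graph. For the inductive step, assume the assignments agree after color class $i-1$; I would then show that both algorithms produce the same assignments for every vertex of color $i$. This is where the legality of $\varphi$ does the real work: any two vertices $v,u$ with $\varphi(v) = \varphi(u) = i$ are non-adjacent, hence $u \notin N(v)$ and vice versa. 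Therefore, when \texttt{OL} processes the block of color $i$ one-at-a-time, the value $L_v[\Xbar]$ seen by each such $v$ depends only on neighbors $w \in N(v)$ with $\varphi(w)\neq i$ — which are either already finalized (colors $<i$) or still at their initial values (colors $>i$). That is exactly the same local view each $v$ sees in round $i$ of \texttt{OLDist}. Since both algorithms apply the same $decide(L_v[\Xbar], r_v)$ with the same $\vec{r}$, their outputs on color class $i$ coincide.

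Taking $i=c$ yields $OL(G,\vec r,\pi) = OLDist(G,\vec r,\varphi)$, completing the proof. The only subtlety — and the one I would be most careful about in writing it up — is the observation that within a color class the sequential updates do not interfere with one another's local views; once stated, it follows directly from $\varphi$ being a legal coloring, so I do not anticipate any genuine obstacle. A minor bookkeeping point is matching indices: the random bits are indexed per-variable in both algorithms, so reusing the same $\vec{r}$ under the chosen $\pi$ is unambiguous.
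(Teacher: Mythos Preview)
Your proposal is correct and follows essentially the same approach as the paper: construct $\pi$ by concatenating the color classes in order (with arbitrary internal ordering), then prove by induction on the color class that the partial assignments of \texttt{OL} and \texttt{OLDist} agree, using the legality of $\varphi$ to argue that same-color vertices are non-adjacent and hence do not affect each other's local views. Your write-up is, if anything, slightly more explicit than the paper's about why $L_v[\Xbar]$ is unchanged within a color class (noting that neighbors of color $>i$ still hold their initial values), but the argument is the same.
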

\begin{proof}
	We prove the claim by induction on the executions of color classes by the distributed algorithm. We note that the execution of the distributed algorithm defines an order on the variables. Let us consider the $i$-th color class. Let us denote these variables as $\set{X_{v_j}}_{j=1}^k$, assigning some arbitrary order within the class. The ordering we analyze for the sequential algorithm would be $\pi(v_j) = (\varphi(v), j)$. Now both the distributed and sequential algorithms follow the same order of color classes, thus we allow ourselves to talk about the sequential algorithm finishing an execution of a color class.
	
	Let $Y_i$ be the assignments to all variables of the distributed algorithm after the $i$-th color class finishes execution. And let $Y'_i$ be the assignments made by the sequential algorithm following $\pi$ until all variable in the $i$-th color class are assigned. Both algorithms initiate the variables identically, so it holds that $Y_0 = Y'_0$. Assume that it holds that $Y_{i-1} = Y'_{i-1}$.  The coloring is legal, so for any $X_u, X_v$, s.t $\varphi(u)=\varphi(v)=i$ it holds that $N(v)\cap u = \emptyset$. Thus, when assigning $v$, its neighborhood is not affected by any other assignments done in the color class, so the randomness is identical for both algorithms, and using the induction hypothesis all assignments up until this color class were identical. Thus, for all variables in this color class $decide$ will be executed with the same parameters for both the distributed and sequential algorithms, and all assignments will be identical.
\end{proof}


Finally we show that for any graph $G$ with a legal coloring $\varphi$, it holds that $E_{\vec{r}}[f(OLDist(G, \vec{r}, \varphi))] \geq \beta(G)$. We know from Lemma~\ref{lem: exists ordering} that for any coloring $\varphi$ there exists an ordering $\pi$ s.t $OL(G,\vec{r}, \pi) = OLDist(G,\vec{r},\varphi)$ for any $\vec{r}$. The proof is direct from here: 
\begin{align*}
&E_{\vec{r}}[f(OLDist(G, \vec{r}, \varphi))] = \sum_{\vec{r}} Pr[\vec{r}] f(OLDist(G, \vec{r}, \varphi)) \\&= \sum_{\vec{r}} Pr[\vec{r}] f(OL(G, \vec{r}, \pi)) = E_{\vec{r}}[f(OL(G, \vec{r}, \pi))] \geq \beta(G)
\end{align*}

We conclude that any orderless-local algorithm can be distributed, achieving the same performance guarantee on $f$, and requiring $O(c)$ communication rounds to finish, given a legal $c$-coloring. We state the following theorem:

\begin{theorem}
\label{thm: orderless greedy}
Given some utility function $f$, any sequential orderless-local algorithm for which it holds that $E_{\vec{r}}[f(OL(G,\vec{r},\pi))] \geq \beta(G)$, can be converted into a distributed algorithm for which it holds that $E_{\vec{r}}[f(OLDist(G,\vec{r},\varphi))] \geq \beta(G)$, where $\varphi$ is a legal $c$-coloring of the graph. The running time of the distributed algorithm is $O(c)$ communication rounds.
\end{theorem}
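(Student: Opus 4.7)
The plan is to derive this theorem as a direct corollary of Lemma~\ref{lem: exists ordering} together with the assumed guarantee $E_{\vec{r}}[f(OL(G,\vec{r},\pi))] \geq \beta(G)$. The heavy lifting has already been done in establishing that lemma, so the theorem reduces to packaging the correctness bound and the round complexity.

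First, I would fix the legal $c$-coloring $\varphi$ and invoke Lemma~\ref{lem: exists ordering} to produce an ordering $\pi$ (namely, order the variables by color class and break ties arbitrarily within each class) such that, pointwise in the randomness $\vec{r}$, the outputs of $OL(G,\vec{r},\pi)$ and $OLDist(G,\vec{r},\varphi)$ coincide. Since $f$ applied to identical outputs yields identical real numbers, the random variables $f(OL(G,\vec{r},\pi))$ and $f(OLDist(G,\vec{r},\varphi))$ agree on every atom of the probability space, so their expectations agree. Combining with the hypothesis gives $E_{\vec{r}}[f(OLDist(G,\vec{r},\varphi))] = E_{\vec{r}}[f(OL(G,\vec{r},\pi))] \geq \beta(G)$, which is the desired approximation guarantee.

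Second, for the round complexity, I would inspect $OLDist$ directly: the initialization step $X_v = init(\Xnv)$ uses only the 1-hop neighborhood (which each node already knows, so this costs $O(1)$ rounds for exchanging $init$ outputs if needed), and the main loop iterates through $c$ color classes, where class $i$ requires each node with $\varphi(v)=i$ to locally evaluate $decide(\Xnv, r_i)$ and then send the resulting assignment to its neighbors in a single CONGEST round. Because $decide$ is a local function with no internal communication and the messages carry a single assignment of $O(\log n)$ bits, each color class costs one communication round, yielding $O(c)$ rounds in total.

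The only potential obstacle is ensuring that what a node observes in its 1-hop neighborhood at the moment it calls $decide$ truly matches what the sequential algorithm under $\pi$ would have seen at the analogous step — but this is exactly the content of Lemma~\ref{lem: exists ordering}, which uses the legality of $\varphi$ to guarantee that no two nodes of the same color are neighbors and hence that simultaneous execution within a color class is consistent with any in-class sequential ordering. Once this is in hand, the theorem follows immediately from the two observations above.
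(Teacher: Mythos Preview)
Your proposal is correct and follows essentially the same approach as the paper: invoke Lemma~\ref{lem: exists ordering} to obtain an ordering $\pi$ for which $OL(G,\vec{r},\pi)=OLDist(G,\vec{r},\varphi)$ pointwise in $\vec{r}$, conclude equality of expectations and hence the $\beta(G)$ bound, and read off the $O(c)$ round count from the structure of \texttt{OLDist}. The paper's own argument is exactly this, written out as a one-line chain of equalities on the expectation.
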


\subsection{Distributed derandomization}
\label{sec: dist derand}
We consider the method of conditional expectations in the distributed case for some local utility function $f(G,\Xbar)$, as defined in the preliminaries. Assume that the value of every $X_v$ is set independently at random according to some distribution on $A$ which depends only on the 1-hop neighborhood of $v$. We are guaranteed that  $\forall G, E[f(G,\Xbar)] \geq \beta(G)$. Thus in the sequential setting we may use the method of conditional expectations to compute a deterministic assignment to the variables with the same guarantee. We show that because $f$ is a local utility function, the method of conditional expectations applied on $f$ is an orderless-local algorithm, and thus can be distributed.

Initially all variables are initiated to some value $\emptyset \notin A$, meaning the variable is unassigned.
Let $Y=\set{X_u = \alpha_u \mid u\in U\subseteq V}$ be some partial assignment to the variables. The method of conditional expectations goes over the variables in any order, and in each iteration sets $X_{v_i} = \argmax_{\alpha} E[f(\Xbar) \mid Y, X_{v_i}=\alpha]$. This is equivalent to $\argmax_{\alpha} \set{E[f(\Xbar) \mid Y, X_{v_i}=\alpha] - E[f(\Xbar) \mid Y]}$, as the subtracted term is just a constant. With this in mind, we present the pseudo code for the method of conditional expectations in Algorithm~\ref{alg: cond-exp}.

\RestyleAlgo{boxruled}
\LinesNumbered
\begin{algorithm}[htbp]	
	\caption{\texttt{CondExpSeq($G$)}}
	\label{alg: cond-exp}
	$\forall v \in V, X_v = \emptyset$ \\
	Order the variables according to any order: $v_1, v_2...,v_n$\\
	\DontPrintSemicolon
	\lFor{$i$ from 1 to $n$ }
	{
		$X_{v_i} = \argmax_{\alpha} E[f(\Xbar) \mid Y, X_{v_i} =\alpha_v] - E[f(\Xbar) \mid Y]$
	}
	Return $\Xbar$\\
\end{algorithm}

To show that Algorithm~\ref{alg: cond-exp} is an orderless-local algorithm we only need to show that $\argmax_{\alpha} E[f(\Xbar) \mid Y, X_v =\alpha_v] - E[f(\Xbar) \mid Y]$ can be computed locally for any $v$. We state the following lemma, followed by the main theorem for this section.

\begin{lemma}
\label{lem: cond-exp local}
The value $\argmax_{\alpha} E[f(\Xbar) \mid Y, X_v =\alpha_v] - E[f(\Xbar) \mid Y]$ can be computed locally.
\end{lemma}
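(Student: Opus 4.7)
The plan is to exploit the local utility function property to rewrite the conditional expectation difference as a purely local quantity. First, fix an arbitrary reference value $\alpha_0 \in A$. Since adding or subtracting a quantity that does not depend on $\alpha$ does not affect the argmax, I would observe that
$$\argmax_\alpha \bigl(E[f(\Xbar) \mid Y, X_v=\alpha] - E[f(\Xbar) \mid Y]\bigr) = \argmax_\alpha \bigl(E[f(\Xbar) \mid Y, X_v=\alpha] - E[f(\Xbar) \mid Y, X_v=\alpha_0]\bigr),$$
so it suffices to compute the latter difference locally for each candidate $\alpha$.

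Next, I would apply the local utility function property pointwise. For any completion $\Xbar$ of the partial assignment $Y$, the definition of a local utility function gives
$$f(\Xbar \cup \set{X_v=\alpha}) - f(\Xbar \cup \set{X_v=\alpha_0}) = g_v(L_v[\Xbar], \alpha, \alpha_0),$$
and crucially $g_v$ depends on $\Xbar$ only through the assignments on $N(v)$. Taking expectations over a random completion drawn conditional on $Y$, and using the fact that each $X_u$ is drawn independently from a distribution that depends only on the $1$-hop neighborhood of $u$, yields
$$E[f(\Xbar) \mid Y, X_v=\alpha] - E[f(\Xbar) \mid Y, X_v=\alpha_0] = E\bigl[g_v(L_v[\Xbar], \alpha, \alpha_0) \;\big|\; Y\bigr].$$
Here I have used independence to conclude that conditioning on $X_v$ does not alter the marginal distribution of the neighboring variables appearing in $L_v[\Xbar]$.

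Finally I would argue that the right-hand side is computable from the $1$-hop environment of $v$ alone. Each neighbor $u \in N(v)$ is either already fixed by $Y$, or is still unassigned, in which case its sampling distribution is determined by the $1$-hop neighborhood of $u$. In both cases the information needed by $v$ can be transmitted in a single round of CONGEST. Since $g_v$ reads only $L_v[\Xbar]$, the expectation reduces to evaluating $g_v(\cdot, \alpha, \alpha_0)$ against a product of known local distributions over the neighbors of $v$, which $v$ can perform internally. Taking the argmax over $\alpha \in A$ finishes the local computation.

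The only subtle point, which I would emphasize in the write-up, is the appeal to mutual independence of the $X_u$'s: this is what guarantees that conditioning on $X_v=\alpha$ has no effect on the joint distribution of $L_v[\Xbar]$. Without this independence the rewriting in the second step would fail, since conditioning on $X_v$ could shift the distribution of distant variables and thereby reintroduce nonlocal dependencies into the expectation.
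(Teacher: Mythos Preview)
Your proof is correct and follows essentially the same route as the paper: both reduce the argmax to a difference of two conditional expectations with $X_v$ fixed to distinct values, then invoke the local utility function property and independence to collapse that difference to an expectation of $g_v$ over the neighbors' local distributions. The only cosmetic difference is that the paper reaches the ``two fixed values of $X_v$'' form via the law of total expectation on $E[f(\Xbar)\mid Y]$, whereas you get there more directly by observing that subtracting any constant (in particular $E[f(\Xbar)\mid Y, X_v=\alpha_0]$ in place of $E[f(\Xbar)\mid Y]$) leaves the argmax unchanged.
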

\begin{proof}
	It holds that:
	\begin{align*}
	&E[f(\Xbar) \mid Y, X_v =\alpha_v] - E[f(\Xbar) \mid Y]
	\\&= \sum_{\alpha \in A} E[f(\Xbar) \mid Y, X_v =\alpha_v]Pr[X_v = \alpha] - \sum_{\alpha \in A}E[f(\Xbar) \mid Y,X_v = \alpha]Pr[X_v=\alpha]
	\\&= \sum_{\alpha \in A} Pr[X_v=\alpha] (E[f(\Xbar) \mid Y, X_v =\alpha_v] - E[f(\Xbar) \mid Y,X_v = \alpha])
	\end{align*}
	Where the first equality is due to the law of total expectation and the fact that $\sum_{\alpha \in A} Pr[X_v=\alpha] =1$. The probability of assigning $X_v$ to some value can be computed locally, so we are only left with the difference between the expectations.
	To show that this is indeed a local quantity we use the definition of expectation as a weighted summation over all possible assignments to unassigned variables.
	Let $U_v$ be the set of all possible assignments to unassigned variables in $N(v)$ and let $U$ be the set of all possible assignments to the rest of the unassigned variables.
	It holds that:
	\begin{align*}
	&E[f(\Xbar) \mid Y, X_v =\alpha_v] - E[f(\Xbar) \mid Y,X_v = \alpha]
	\\&=\sum_{Z_v \in U_v}\sum_{Z\in U} Pr[Z_v]Pr[Z] f(\Xbar \cup Z_v \cup Z \cup \set{X_v = \alpha_v}) - f(\Xbar \cup Z_v \cup Z \cup \set{X_v = \alpha})
	\\&=\sum_{Z_v\in U_v}\sum_{Z \in U} Pr[Z_v]Pr[Z] g_v(L_v[\Xbar \cup Z_v \cup Z], \alpha, \alpha_v)
	=\sum_{Z_v\in U_v}\sum_{Z \in U} Pr[Z_v]Pr[Z] g_v(L_v[\Xbar \cup Z_v], \alpha, \alpha_v)
	\\&=\sum_{Z_v\in U_v} Pr[Z_v] g_v(L_v[\Xbar \cup Z_v], \alpha, \alpha_v), 
	\end{align*}
	
	where in the first equality we use the definition of expectations and the fact that the variables are set independently of each other. Then we use the definition of a local utility function, and finally the dependence on $U$ disappears due to the law of total probability. The final sum can be computed locally, as the probabilities for assigning variables in $Z_v$ are known and $g_v$ is local.
\end{proof}

\begin{theorem}
\label{thm: derand color}
Let $G$ be any graph and $f$ a local utility function for which it holds that $E[f(\Xbar)] \geq \beta$, where the random assignments to the variables are independent of each other, and depend only on the immediate neighborhood of the node. There exists a distributed algorithm achieving the same value as the expected value for $f$, running in $O(c)$ communication rounds in the CONGEST model, given a legal $c$-coloring.
\end{theorem}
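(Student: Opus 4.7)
The plan is to derive this theorem as a direct corollary of Lemma~\ref{lem: cond-exp local} combined with Theorem~\ref{thm: orderless greedy}. The strategy is to recognize Algorithm~\ref{alg: cond-exp} as an instance of the orderless-local template of Algorithm~\ref{alg: orderless-local} and then invoke the generic distributed simulation.

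First I would verify that \texttt{CondExpSeq} fits the orderless-local structure. Its initialization is trivial ($X_v = \emptyset$), and its decision step at vertex $v_i$ computes $\argmax_{\alpha} \bigl(E[f(\Xbar) \mid Y, X_{v_i}=\alpha] - E[f(\Xbar) \mid Y]\bigr)$. By Lemma~\ref{lem: cond-exp local} this quantity depends only on $L_v[\Xbar]$ together with the conditional distributions of $X_v$ and of its neighbors. Since by hypothesis each such distribution depends only on its own 1-hop neighborhood, a single preprocessing round suffices for every node to learn its neighbors' distributions, after which the decision step is a purely local function. The algorithm is deterministic, so the random bits $\vec{r}$ in the orderless-local template are effectively unused.

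Second, I would establish the invariant $E[f(\Xbar) \mid Y_i] \geq \beta$, where $Y_i$ is the partial assignment produced after the first $i$ steps. This is the standard derandomization argument: by the law of total expectation,
\begin{equation*}
E[f(\Xbar) \mid Y_{i-1}] = \sum_{\alpha \in A} Pr[X_{v_i}=\alpha]\, E[f(\Xbar) \mid Y_{i-1}, X_{v_i}=\alpha],
\end{equation*}
so the greedy choice $\alpha^\ast$ satisfies $E[f(\Xbar) \mid Y_{i-1}, X_{v_i}=\alpha^\ast] \geq E[f(\Xbar) \mid Y_{i-1}]$. Iterating from the base case $E[f(\Xbar) \mid Y_0] = E[f(\Xbar)] \geq \beta$ and observing that $Y_n$ fixes every variable (so $f(\Xbar) = E[f(\Xbar) \mid Y_n]$) yields $f(\texttt{CondExpSeq}(G)) \geq \beta$ deterministically, which in particular matches the expected value of the original randomized process.

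Third, I would apply Theorem~\ref{thm: orderless greedy} with $\beta(G)=\beta$ to lift \texttt{CondExpSeq} to a distributed algorithm that runs in $O(c)$ rounds on a legal $c$-coloring while inheriting the same deterministic guarantee. CONGEST compliance follows because the per-round messages are merely assignments in $A$ (plus the one-time distribution broadcast), and for the applications in this paper $|A|$ as well as the description size of each distribution are constant, so messages fit in $O(\log n)$ bits. The only nontrivial obstacle---locality of the conditional expectation difference---has already been handled by Lemma~\ref{lem: cond-exp local}: expectations over the entire residual randomness collapse to a 1-hop sum because single-variable perturbations of a local utility function decompose through $g_v$, and marginalizing over variables outside $N(v)$ contributes a factor of $1$ by independence and total probability. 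Once this locality is in hand, the remainder is a mechanical invocation of the orderless-local framework.
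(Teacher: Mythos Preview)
Your proposal is correct and follows the same route as the paper: show via Lemma~\ref{lem: cond-exp local} that \texttt{CondExpSeq} is an orderless-local algorithm, then invoke Theorem~\ref{thm: orderless greedy}. The paper leaves the invariant $E[f(\Xbar)\mid Y_i]\geq\beta$ implicit (it is the standard method of conditional expectations recalled in the preliminaries), whereas you spell it out explicitly and also add the CONGEST message-size remark; these are welcome details but not a different argument.
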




\subsection{Submodular Maximization}
In this section we consider the problem of unconstrained submodular function maximization. Given an submodular function $f$ (as defined in Section~\ref{sec: preliminaries}), we aim to find an input s.t the function is maximized. 
There are no constraints on the input set we pass to the function, hence it is 'unconstrained'. 
We are interested in finding an approximate solution to the problem, to this end, we consider both the deterministic and randomized algorithms of \cite{BuchbinderFNS15}, achieving 1/3 and 1/2 approximation ratios for unconstrained submodular maximization. We show that both can be expressed as orderless-local algorithms for any local utility function. 
As the deterministic and randomized algorithms of \cite{BuchbinderFNS15} are almost identical, we focus on the randomized algorithm achieving a 1/2-approximation in expectation (Algorithm~\ref{alg: rand-submodular}), as it is a bit more involved (The deterministic algorithm appears as Algorithm~\ref{alg: det-submodular}). The algorithms of \cite{BuchbinderFNS15} are defined for any submodular function, but as we are interested only in the case where the ground set is $V$, we will present it as such.

The algorithm maintains two variable assignment $Z_i, Y_i$, initially $Z_0 = \set{X_v = 0 \mid v\in V}$, $Y_0 = \set{X_v = 1 \mid v\in V}$. It iterates over the variables in any order, at each iteration it considers two nonnegative quantities $a_i,b_i$. These quantities represent the gain of either setting $X_{v_i}=1$ in $Z_{i-1}$ or setting $X_{v_i}=0$ in $Y_{i-1}$. Next a coin is flipped with probability $p = a_i / (a_i + b_i)$, if $a_i=b_i=0$ we set $p=1$. If we get heads we set $X_{v_i}=1$ in $Z_i$ and otherwise we set it to 0 in $Y_i$. When the algorithm ends it holds that $Z_n=Y_n$, and this is our solution. The deterministic algorithm is almost identical, only that it allows $a_i, b_i$ to take negative values, and instead of flipping a coin it makes the decision greedily by comparing $a_i, b_i$.


\RestyleAlgo{boxruled}
\LinesNumbered
\begin{algorithm}[htbp]
	\caption{\texttt{det-usm($f$)}}
	\label{alg: det-submodular}
	$Z_0 = \set{X_v = 0 \mid v\in V}$, $Y_0 = \set{X_v = 1 \mid v\in V}$\\
	\For{$i$ from 1 to $n$ }
	{
		$a_i = f(Z_{i-1} \cup \set{X_{v_i}=1}) - f(Z_{i-1})$\\
		$b_i = f(Y_{i-1} \cup \set{X_{v_i}=0}) - f(Y_{i-1})$\\
		
		\If{$a_i \geq b_i$}
		{
			$Z_i = Z_{i-1} \cup \set{X_{v_i}=1}$\\
			$Y_i = Y_{i-1} $\\
		}
		\Else
		{
			$Z_i = Z_{i-1}$\\
			$Y_i = Y_{i-1} \cup \set{X_{v_i}=0}$\\
		}
	}
	return $Z_n$\\
\end{algorithm}

\RestyleAlgo{boxruled}
\LinesNumbered
\begin{algorithm}[htbp]
	\caption{\texttt{rand-usm($f$)}}
	\label{alg: rand-submodular}

    $Z_0 = \set{X_v = 0 \mid v\in V}$, $Y_0 = \set{X_v = 1 \mid v\in V}$\\
    Order the variables in any order $v_1,...,v_n$\\
	\For{$i$ from 1 to $n$ }
	{
    	$a_i = \max \set{f(Z_{i-1} \cup \set{X_{v_i}=1}) - f(Z_{i-1}), 0}$\\
        $b_i = \max \set{f(Y_{i-1} \cup \set{X_{v_i}=0}) - f(Y_{i-1}), 0}$\\
        \DontPrintSemicolon
        \textbf{if} $a_i + b_i = 0$ \textbf{then} $p = 1$ \textbf{else} $p = a_i / (a_i + b_i)$\\
        $Y_i = Y_{i-1}, Z_i = Z_{i-1} $\\

        Flip a coin with probability $p$, \textbf{if} heads $Z_i = Z_{i} \cup \set{X_{v_i}=1}$, \textbf{else}  $Y_i = Y_{i} \cup \set{X_{v_i}=0}$\\
    }
    return $Z_n$\\
\end{algorithm}

We first note that the algorithm does not directly fit into our mold, as each vertex has two variables. We can overcome this, by taking $X_v$ to be a binary tuple, the first coordinate stores its value for $Z_i$, and the other for $Y_i$. Initially it holds that $\forall v\in V, X_v = (0,1)$, and our final goal function will only take the first coordinate of the variable.
We note that because $f$ is a local utility function the values $a_i,b_i$ can be computed locally, this results directly from the definition of a local utility function, as we are interested in the change in $f$ caused by flipping a single variable. Now we may rewrite the algorithm as an orderless-local algorithm, the pseudocode as Algorithm~\ref{alg: rand-submodular-OL}.

\RestyleAlgo{boxruled}
\LinesNumbered
\begin{algorithm}[htbp]
	\caption{\texttt{rand-usm($G,\vec{r}, \pi$)}}
	\label{alg: rand-submodular-OL}
	$\forall v\in V, X_v = (0,1)$\\
	Order the vertices according to $\pi$\\
	\For{$i$ from 1 to $n$ }
	{
		$X_u= decide(\Xnv,\vec{r}_i)$\\
	}
	return $X_n$\\
\end{algorithm}

\RestyleAlgo{boxruled}
\LinesNumbered
\begin{algorithm}[htbp]
	\caption{\texttt{decide($\Xnv, r$)}}
	\label{alg: rand-submodular-OL-decide}
	\DontPrintSemicolon
	
	$Z = \set{X_u = \alpha_{u,1} \mid \set{X_u = (\alpha_{u,1},\alpha_{u,2})} \in \Xnv}$\\
	
	$Y = \set{X_u = \alpha_{u,2} \mid \set{X_u = (\alpha_{u,1},\alpha_{u,2})} \in \Xnv}$\\
	
	$a = \max \set{g_v(Z,0,1), 0}$\\
	$b = \max \set{g_v(Z,1,0), 0}$\\

	\lIf{$a + b = 0$}
	{
		$p = 1$
	}
	\lElse
	{
		$p = a / (a + b)$
	}
	Flip coin with probability $p$\\
	\lIf{heads}
	{
		return (1,1)
	}
	\lElse
	{
		return (0,0)
	}
	
\end{algorithm}

Using Theorem~\ref{thm: orderless greedy} we state our main result:

\begin{theorem}
\label{thm: usm color}
For any graph $G$ and a local unconstrained submodular function $f$ with $V$ as its ground set, there exists a randomized distributed 1/2-approximation, and a deterministic 1/3-approximation algorithms running in $O(c)$ communication rounds in the CONGEST model, given a legal $c$-coloring.
\end{theorem}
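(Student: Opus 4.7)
\medskip

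\noindent\textbf{Proof proposal.} The plan is to verify that the reformulations in Algorithm~\ref{alg: rand-submodular-OL} and Algorithm~\ref{alg: rand-submodular-OL-decide} (and the analogous deterministic variant based on Algorithm~\ref{alg: det-submodular}) are \emph{bona fide} orderless-local algorithms computing exactly the same output distribution as the sequential algorithms of \cite{BuchbinderFNS15}, and then to invoke Theorem~\ref{thm: orderless greedy}. First, I would observe that the tuple encoding $X_v=(\alpha_{v,1},\alpha_{v,2})$, initialized to $(0,1)$, is a faithful simulation of the two parallel assignments $Z_i,Y_i$ of \cite{BuchbinderFNS15}: at any moment during execution, projecting the current variables onto their first (resp.\ second) coordinate reproduces $Z_i$ (resp.\ $Y_i$), and the two terminal equalities $Z_n=Y_n$ correspond to the two allowed terminal tuples $(1,1)$ and $(0,0)$ returned by \texttt{decide}.

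Next I would show that \texttt{decide} is genuinely local, i.e.\ it depends only on $L_v[\Xbar]$ and on the private random bits $r$. The only non-trivial quantities inside \texttt{decide} are
\[
a=\max\{g_v(Z,0,1),0\},\qquad b=\max\{g_v(Z,1,0),0\},
\]
which, by the definition of a local utility function, coincide with the marginal gains $f(Z_{i-1}\cup\{X_{v}=1\})-f(Z_{i-1})$ and $f(Y_{i-1}\cup\{X_{v}=0\})-f(Y_{i-1})$ used in \cite{BuchbinderFNS15}, capped at $0$. Since $g_v$ only inspects the 1-hop neighborhood $L_v[\Xbar]$, and the biased coin is drawn from $r$, \texttt{decide} fits the orderless-local template of Algorithm~\ref{alg: orderless-local}. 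Hence Algorithm~\ref{alg: rand-submodular-OL} is orderless-local in the sense of Section~\ref{sec: dist derand}.

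Now for the approximation guarantee I would argue as follows. The analysis of \cite{BuchbinderFNS15} shows that for \emph{any} ordering $\pi$ of the ground set, the randomized algorithm achieves $E[f(Z_n)]\geq\tfrac12 f(\mathrm{OPT})$ and the deterministic algorithm achieves $f(Z_n)\geq\tfrac13 f(\mathrm{OPT})$. Since our reformulation reproduces the same joint distribution on $(Z_i,Y_i)$ step by step under any $\pi$, the same bounds hold for $E_{\vec r}[f(\mathrm{OL}(G,\vec r,\pi))]$. Setting $\beta(G)=\tfrac12 f(\mathrm{OPT})$ in the randomized case and $\beta(G)=\tfrac13 f(\mathrm{OPT})$ in the deterministic case, Theorem~\ref{thm: orderless greedy} immediately yields a distributed algorithm with the same guarantee, running in $O(c)$ communication rounds given a legal $c$-coloring $\varphi$.

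The main subtlety — the place I would be most careful in a full write-up — is the correspondence between the sequential execution of \cite{BuchbinderFNS15} and our tuple-based orderless-local version under an arbitrary adversarial order $\pi$. Once this is made rigorous (and it is essentially bookkeeping, since the decision at step $i$ depends only on marginal gains at $v_i$, which our local $g_{v_i}$ computes correctly from the current 1-hop view), the CONGEST bandwidth constraint is also satisfied: each node only needs to broadcast an $O(1)$-sized tuple to its neighbors after its color class executes, which fits in a single $O(\log n)$-bit message per round. The deterministic case is handled identically by replacing the coin flip in Algorithm~\ref{alg: rand-submodular-OL-decide} with the sign comparison of Algorithm~\ref{alg: det-submodular} and dropping the $\max\{\cdot,0\}$ caps, preserving locality of \texttt{decide} and yielding the $1/3$ bound.
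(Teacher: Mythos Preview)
Your proposal is correct and follows essentially the same approach as the paper: encode each vertex variable as a binary tuple simulating the pair $(Z_i,Y_i)$, observe that the marginal gains $a_i,b_i$ are computable from $g_v$ and hence local, conclude that the (randomized and deterministic) algorithms of \cite{BuchbinderFNS15} are orderless-local, and then invoke Theorem~\ref{thm: orderless greedy}. Your write-up is more explicit than the paper's (in particular the remarks on CONGEST bandwidth and on the step-by-step correspondence under an arbitrary order $\pi$), but the underlying argument is identical.
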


\subsection{Fast approximations for cut functions}
\label{sec: fast cuts}
Using the results of the previous sections we can provide fast and simple approximation algorithms for Max-DiCut and Max $k$-Cut. Lemma~\ref{lem: cuts local} guarantees that the utility functions for these problems are indeed local utility functions. For Max-DiCut we use the algorithms of Buchbinder \etal, as this is an unconstrained submodular function. For Max $k$-Cut each node choosing a side uniformly at random achieves a $(1-1/k)$ approximation, thus we use the results of Section~\ref{sec: dist derand}.
Theorem~\ref{thm: usm color} and Theorem~\ref{thm: derand color} immediately guarantee distributed algorithms, running in $O(c)$ communication rounds given a legal $c$-coloring.


Denote by $Cut(G,\varphi)$ one of the cut algorithms guaranteed by Theorem~\ref{thm: usm color} or Theorem~\ref{thm: derand color}. We present two algorithms, \texttt{approxCutDet}, a deterministic algorithm to be used when $Cut(G,\varphi)$ is deterministic (Algorithm~\ref{alg: simple cut det}), and, \texttt{approxCutRand}, a randomized algorithm (Algorithm~\ref{alg: simple cut rand}) for the case when $Cut(G,\varphi)$ is randomized.
\texttt{approxCutDet} works by coloring the graph $G$ using a weighted $\epsilon$-defective coloring and then defining a new graph $G'$ by dropping all of the monochromatic edges. This means that the coloring is a legal coloring for $G'$. Finally we call one of the deterministic cut functions. \texttt{approxCutRand} is identical, apart from the fact that nodes choose a color uniformly at random from $[\ceil{\epsilon^{-1}}]$.

\RestyleAlgo{boxruled}
\LinesNumbered
\begin{algorithm}[htbp]
	\caption{\texttt{approxCutDet}$(G,\epsilon)$}
	\label{alg: simple cut det}
	$\varphi$ = epsilonColor$(G, \epsilon)$\\
	Let $G' = (V, E'=\set{(v,u)\in E \mid \varphi(v) \neq \varphi(u) })$\\
	Cut$(G', \varphi)$\\
\end{algorithm}

\RestyleAlgo{boxruled}
\LinesNumbered
\begin{algorithm}[htbp]
	\caption{\texttt{approxCutRand}$(G,\epsilon)$}
	\label{alg: simple cut rand}
	Each vertex $v$ chooses $\varphi(v)$ uniformly at random from $[\ceil{\epsilon^{-1}}]$\\
	Let $G' = (V, E'=\set{(v,u)\in E \mid \varphi(v) \neq \varphi(u) })$\\
	Cut$(G', \varphi)$\\
\end{algorithm}

For \texttt{approxCutDet}, the running time of the coloring is $O(\log^* n)$ rounds, returning a weighted $\epsilon$-defective $O(\epsilon^{-2})$-coloring. The running time of the cut algorithms is the number of colors, thus the total running time of the algorithm is $O(\epsilon^{-2}  + \log^* n)$ rounds. Using the same reasoning, the running time of \texttt{approxCutRand} is $O(\epsilon^{-1})$.
It is only left to prove the approximation ratio. We prove the following lemma:
\begin{lemma}
\label{lem: drop edges approx}
Let $G(V,E,w)$ be any graph, and let $G'(V,E',w)$ be a graph resulting from removing any subset of edges from $G$ of total weight at most $\epsilon \sum_{e\in E} w(e)$. Then for any constant $p$, any $p$-approximation for Max-DiCut or Max $k$-Cut for $G'$ is a $p(1-4\epsilon)$-approximation for $G$.
\end{lemma}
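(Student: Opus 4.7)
The plan is to chain two elementary inequalities: first, that the cut value of the approximate solution does not drop when we re-introduce the removed edges, and second, that the optimum of $G'$ is close to the optimum of $G$, where the closeness is controlled by the fact that for both cut problems considered, $OPT(G)$ is a constant fraction of the total edge weight.

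First I would set $W := \sum_{e\in E} w(e)$, note that by assumption the total weight of edges in $E \setminus E'$ is at most $\epsilon W$, and let $\Xbar$ denote the output of the $p$-approximation run on $G'$, so that $f_{G'}(\Xbar) \geq p \cdot OPT(G')$. The first easy step is to observe that for both Max $k$-Cut and Max-DiCut the per-edge contribution to the utility is non-negative, hence
\begin{equation*}
f_G(\Xbar) \;=\; f_{G'}(\Xbar) + \sum_{e \in E \setminus E'} (\text{contribution of } e) \;\geq\; f_{G'}(\Xbar) \;\geq\; p \cdot OPT(G').
\end{equation*}

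Next I would relate $OPT(G')$ to $OPT(G)$. Taking the optimal assignment for $G$ and evaluating it on $G'$ can only lose the contributions of the removed edges, whose total weight is at most $\epsilon W$, so $OPT(G') \geq OPT(G) - \epsilon W$. The crucial observation is now the standard lower bound on the optimum: a uniformly random side-assignment yields an expected cut value of at least $W/2$ for Max $k$-Cut (with $k\geq 2$) and at least $W/4$ for Max-DiCut, so in both cases $OPT(G) \geq W/4$, i.e.\ $W \leq 4 \cdot OPT(G)$. Substituting gives $OPT(G') \geq OPT(G) - 4\epsilon \cdot OPT(G) = (1-4\epsilon) OPT(G)$.

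Combining the two inequalities,
\begin{equation*}
f_G(\Xbar) \;\geq\; p \cdot OPT(G') \;\geq\; p(1-4\epsilon) \cdot OPT(G),
\end{equation*}
which is exactly the claimed approximation guarantee on $G$. There is no real obstacle here; the only thing to be careful about is using the worst of the two trivial bounds ($W/4$ for Max-DiCut) so that a single constant $4$ works uniformly for both problem families stated in the lemma. The argument is robust enough that the same reasoning would also cover max-agree correlation clustering with two clusters, since there too the optimum is $\Omega(W)$.
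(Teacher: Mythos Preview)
Your proof is correct and follows essentially the same approach as the paper: both bound $OPT(G') \geq OPT(G) - \epsilon W$ by restricting the optimal assignment, invoke the probabilistic lower bound $OPT(G) \geq W/4$ (using the worst case of Max-DiCut), and chain the inequalities. The only difference is that you make explicit the observation $f_G(\Xbar) \geq f_{G'}(\Xbar)$ via nonnegativity of per-edge contributions, whereas the paper leaves this step implicit.
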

\begin{proof}
Let $OPT, OPT'$ be the size of optimal solutions for $G, G'$. It holds that $OPT' \geq OPT - \epsilon \sum_{e\in E} w(e)$, as any solution for $G$ is also a solution for $G'$ whose value differs by at most $\epsilon \sum_{e\in E} w(e)$ (the weight of discarded edges).
Assigning every node a cut side uniformly at random the expected cut weight is at least $\sum_{e\in E} w(e)/4$ for Max-DiCut and Max $k$-Cut. Using the probabilistic method this implies that $OPT \geq \sum_{e\in E} w(e)/4$. Using all of the above we can say that given a $p$-approximate solution for $OPT'$ it holds that:
$p\cdot OPT' \geq p(OPT - \epsilon \sum_{e\in E} w(e)) \geq p(OPT - 4\epsilon OPT) = p(1-4\epsilon)OPT $
\end{proof}
Lemma~\ref{lem: drop edges approx} immediately guarantees the approximation ratio for the deterministic algorithm. As for the randomized algorithm, let the random variable $\delta$ be the fraction of edges removed, let $p$ be the approximation ratio guaranteed by one of the cut algorithms and let $\rho$ be the approximation ratio achieved by \texttt{approxCutRand}. We know that $E_\rho[\rho \mid \delta] = p(1-4\delta)$. Applying the law of total expectations we get that $E[\rho] = E_\delta [E_\rho[\rho \mid \delta]]=E_\delta[p(1-4\delta)] = p(1-4\epsilon)$. We state our main theorems for this section.

\begin{theorem}
\label{thm: defective dicut}
There exists a deterministic $(1-1/k-\epsilon)$-approximation algorithms for Weighted Max $k$-Cut running in $O( \log^* n)$ communication rounds in the CONGEST model.
\end{theorem}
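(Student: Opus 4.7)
The plan is to invoke Algorithm \texttt{approxCutDet} with a carefully chosen defect parameter $\epsilon' = \epsilon/4$ and verify both the round complexity and the approximation guarantee by composing the tools already assembled in the paper.

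First, I would compute a weighted $\epsilon'$-defective coloring $\varphi$ using Theorem~\ref{thm: eps-coloring}, which produces an $O(\epsilon'^{-2})$-coloring deterministically in $O(\log^* n)$ rounds in CONGEST. I would then form $G' = (V, E')$ by keeping only the bichromatic edges. By construction $\varphi$ is a \emph{legal} coloring of $G'$, so the derandomization machinery can be applied there. The trivial random assignment that places each vertex in one of $k$ parts uniformly and independently yields an expected $(1-1/k)$-approximation for weighted Max $k$-Cut, and the cut utility function is local by Lemma~\ref{lem: cuts local}. Hence Theorem~\ref{thm: derand color} gives a deterministic $(1-1/k)$-approximation on $G'$ in $O(\epsilon'^{-2})$ rounds, for a total of $O(\epsilon'^{-2} + \log^* n) = O(\log^* n)$ rounds when $\epsilon$ is constant.

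Next I would argue that very little edge weight is lost when passing to $G'$. Since $\varphi$ is weighted $\epsilon'$-defective, for every $v$ the total weight of monochromatic edges at $v$ is at most $\epsilon' w(v) = \epsilon' \sum_{e \ni v} w(e)$. Summing over $v$ counts each monochromatic edge exactly twice (and each edge at $v$ twice in $\sum_v w(v) = 2 \sum_{e \in E} w(e)$), so the total weight of discarded edges is at most
\[
\frac{1}{2}\sum_{v\in V} \epsilon' w(v) \;=\; \epsilon' \sum_{e\in E} w(e).
\]
I can then invoke Lemma~\ref{lem: drop edges approx} with this subset of dropped edges to conclude that the $(1-1/k)$-approximation on $G'$ is a $(1-1/k)(1-4\epsilon')$-approximation on $G$. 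Using $(1-1/k)(1-4\epsilon') \geq (1-1/k) - 4\epsilon'$ and choosing $\epsilon' = \epsilon/4$ gives the desired $(1-1/k-\epsilon)$ guarantee.

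The only step that is not a one-line quotation of an earlier result is the double-counting bound translating vertex-wise weighted defect into total dropped edge weight; this is the place where I would be most careful (for example, to confirm the factor of $2$ and to make sure it is compatible with Lemma~\ref{lem: drop edges approx}'s hypothesis of losing at most an $\epsilon$-fraction of total weight). Everything else is bookkeeping: running-time is $O(\log^* n) + O(\epsilon^{-2}) = O(\log^* n)$ for constant $\epsilon$, and approximation is $(1-1/k-\epsilon)$ by the computation above.
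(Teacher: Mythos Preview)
Your proposal is correct and follows essentially the same route as the paper: run \texttt{approxCutDet}, i.e., compute a weighted $\epsilon'$-defective $O(\epsilon'^{-2})$-coloring via Theorem~\ref{thm: eps-coloring}, drop monochromatic edges, apply the derandomized $(1-1/k)$-approximation from Theorem~\ref{thm: derand color} on the legally colored $G'$, and conclude via Lemma~\ref{lem: drop edges approx}. You are in fact slightly more careful than the paper, making explicit both the double-counting step (per-vertex weighted defect $\Rightarrow$ at most an $\epsilon'$-fraction of total edge weight is dropped) and the rescaling $\epsilon' = \epsilon/4$ needed to absorb the factor $4$ in Lemma~\ref{lem: drop edges approx}; the paper leaves these implicit.
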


\begin{theorem}
\label{thm: defective dicut}
There exists a deterministic $(1/3-\epsilon)$-approximation algorithm for Weighted Max-DiCut running in $O(\log^* n)$ communication rounds in the CONGEST model.
\end{theorem}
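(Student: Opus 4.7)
The plan is to instantiate Algorithm~\ref{alg: simple cut det} (\texttt{approxCutDet}) with the deterministic $1/3$-approximation for unconstrained submodular maximization guaranteed by Theorem~\ref{thm: usm color}, applied to the Max-DiCut utility function. By Lemma~\ref{lem: cuts local} this utility function is a local utility function, so Theorem~\ref{thm: usm color} indeed gives us a deterministic $1/3$-approximation \texttt{Cut}$(G', \varphi)$ running in $O(c)$ CONGEST rounds whenever $\varphi$ is a legal $c$-coloring of $G'$.

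First I would invoke Theorem~\ref{thm: eps-coloring} with parameter $\epsilon' = 3\epsilon/4$ to obtain, deterministically in $O(\log^* n)$ rounds, a weighted $\epsilon'$-defective coloring $\varphi$ using $O(\epsilon'^{-2}) = O(\epsilon^{-2})$ colors. Next I would form $G' = (V, E')$ by discarding every monochromatic edge (this is a local operation requiring no extra communication beyond one exchange of colors between neighbors). The coloring $\varphi$ is then a legal coloring of $G'$, so applying the deterministic Max-DiCut algorithm from Theorem~\ref{thm: usm color} on $(G', \varphi)$ outputs a $1/3$-approximation to the Max-DiCut of $G'$ in $O(\epsilon^{-2})$ further rounds, giving total round complexity $O(\log^* n + \epsilon^{-2}) = O(\log^* n)$ for constant $\epsilon$.

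To translate this guarantee back to the original graph $G$, I would bound the weight discarded. Summing the weighted-defect bound across vertices,
\begin{equation*}
2 \sum_{e \in E \setminus E'} w(e) \;=\; \sum_{v \in V} \mathrm{defect}_w(v) \;\le\; \sum_{v \in V} \epsilon' \, w(v) \;=\; 2\epsilon' \sum_{e \in E} w(e),
\end{equation*}
so at most an $\epsilon'$-fraction of the total edge weight is removed. Lemma~\ref{lem: drop edges approx} then turns the $1/3$-approximation on $G'$ into a $(1/3)(1 - 4\epsilon')$-approximation on $G$; with $\epsilon' = 3\epsilon/4$ this is exactly $1/3 - \epsilon$.

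I do not expect a serious obstacle: every ingredient is already in place (the local utility function property, the deterministic USM-via-coloring adaptation, the deterministic weighted defective coloring, and the drop-edges reduction). The only point that requires a moment of care is converting the per-vertex weighted defect into a global bound on the discarded weight fraction so that Lemma~\ref{lem: drop edges approx} can be applied with the right constant; the double-counting calculation above handles this cleanly. Everything else is a direct composition of earlier results.
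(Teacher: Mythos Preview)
Your proposal is correct and follows exactly the paper's approach: instantiate \texttt{approxCutDet} with the deterministic USM-based $1/3$-approximation from Theorem~\ref{thm: usm color} (licensed by Lemma~\ref{lem: cuts local}), use Theorem~\ref{thm: eps-coloring} for the coloring, drop monochromatic edges, and apply Lemma~\ref{lem: drop edges approx}. Your double-counting calculation and explicit choice $\epsilon' = 3\epsilon/4$ are in fact more careful than the paper, which simply asserts that Lemma~\ref{lem: drop edges approx} ``immediately guarantees the approximation ratio'' without spelling out the global weight bound or the constant.
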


\begin{theorem}
\label{thm: defective dicut}
There exists a randomized distributed expected $(1/2-\epsilon)$-approximation for Weighted Max-DiCut running in $O(\epsilon^{-1})$ communication rounds in the CONGEST model.
\end{theorem}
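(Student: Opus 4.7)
The plan is to directly apply \texttt{approxCutRand} (Algorithm~\ref{alg: simple cut rand}) with the randomized expected $1/2$-approximation Max-DiCut subroutine obtained by instantiating Theorem~\ref{thm: usm color} on Algorithm~\ref{alg: rand-submodular}. Lemma~\ref{lem: cuts local} guarantees that the Max-DiCut utility function is local, so Theorem~\ref{thm: usm color} applies and yields a distributed expected $1/2$-approximation in $O(c)$ rounds whenever a legal $c$-coloring is provided. I would invoke \texttt{approxCutRand} with internal parameter $\epsilon' = \epsilon/2$ so that the final additive loss comes out to exactly $\epsilon$ rather than $2\epsilon$.

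For the running time, sampling $\varphi(v)$ uniformly from $[\lceil 2/\epsilon\rceil]$ is purely local; a single communication round suffices for each vertex to learn its neighbors' colors and thus to identify the edge set $E'$ of $G'$ by discarding all monochromatic edges. Since $\varphi$ is then by construction a legal coloring of $G'$ using $\lceil 2/\epsilon\rceil$ colors, the cut subroutine runs in $O(\epsilon^{-1})$ rounds by Theorem~\ref{thm: usm color}, giving a total communication complexity of $O(\epsilon^{-1})$.

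For the approximation ratio, let $\delta$ be the random variable equal to the total weight of monochromatic edges divided by $\sum_{e\in E} w(e)$. Each edge is monochromatic with probability at most $\epsilon/2$, so $E[\delta]\leq \epsilon/2$ by linearity of expectation. Conditioning on any realization of $\varphi$, the subroutine returns, in expectation over its own internal coin flips, at least $\tfrac{1}{2}\cdot OPT(G')$; Lemma~\ref{lem: drop edges approx} then gives $\tfrac{1}{2}\cdot OPT(G') \geq \tfrac{1}{2}(1-4\delta)\cdot OPT(G)$. Applying the tower law over the coloring randomness yields
\[
E[ALG] \;\geq\; \tfrac{1}{2}\bigl(1-4\,E[\delta]\bigr)\cdot OPT(G) \;\geq\; \tfrac{1}{2}(1-2\epsilon)\cdot OPT(G) \;\geq\; (\tfrac{1}{2}-\epsilon)\cdot OPT(G).
\]

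The main subtlety is making the conditional argument from Lemma~\ref{lem: drop edges approx} compose correctly with the \emph{expected} guarantee of the cut subroutine rather than a deterministic one. The key observation is that the coin flips producing $\varphi$ are independent of the coin flips used inside Algorithm~\ref{alg: rand-submodular}, so after conditioning on $\varphi$ the subroutine's $1/2$-approximation guarantee still holds in expectation over its own internal randomness. This decoupling, combined with the straightforward bound $E[\delta]\leq \epsilon/2$, is exactly what lets the law of total expectation collapse everything to the desired $(1/2-\epsilon)$ ratio.
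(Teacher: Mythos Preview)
Your proposal is correct and follows essentially the same route as the paper: run \texttt{approxCutRand}, bound the expected fraction of dropped edge weight, invoke Lemma~\ref{lem: drop edges approx} conditionally on the coloring, and collapse via the law of total expectation. You are in fact slightly more careful than the paper in two respects---you make the $\epsilon\mapsto\epsilon/2$ rescaling explicit to land exactly on $1/2-\epsilon$, and you spell out the independence between the coloring coins and the submodular-maximization coins that justifies the conditional use of the $1/2$-approximation guarantee---but the argument is the same.
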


\paragraph{Correlation clustering} We note the same techniques used for Max-Cut work directly for max-agree correlation clustering on general graphs. Specifically, if we divide the nodes into two clusters, s.t each node selectes a cluster uniformly at random, each edge has exactly probability 1/2 to agree with the clustering, thus the expected value of the clustering is $\sum_{e\in E} w(e)/2$, which is a 1/2-approximation. The above can be derandomized exactly in the same manner as Max-Cut, meaning this is an orderless local algorithm. Finally, we apply the weighted $\epsilon$-defective coloring algorithm twice (note that we ignore the sign of the edge),
discard all monochromatic edges and execute the deterministic algorithm guaranteed from Theorem~\ref{thm: derand color} with a legal coloring. Because there must exists a clustering which has a value at least $\sum_{e\in E} w(e)/2$, a lemma identical to Lemma~\ref{lem: drop edges approx} can be proved and hence we are done. We state the following theorem:

\begin{theorem}
\label{thm: defective dicut}
There exists a deterministic $(1/2-\epsilon)$-approximation algorithms for weighted max-agree correlation clustering on general graphs, running in $O( \log^* n)$ communication rounds in the CONGEST model.
\end{theorem}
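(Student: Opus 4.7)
The plan is to follow the same template used for Max $k$-Cut earlier in Section~\ref{sec: fast cuts}, adapted to the correlation-clustering objective. First, I would observe that if every vertex independently places itself in one of two clusters with probability $1/2$, then every edge — positive or negative — agrees with the resulting clustering with probability exactly $1/2$; hence the expected weight of agreements is $\sum_{e\in E} w(e)/2$, a $1/2$-approximation. Combined with Lemma~\ref{lem: cuts local}, which certifies that the max-agree utility function for two clusters is a local utility function, Theorem~\ref{thm: derand color} already yields a deterministic $1/2$-approximation running in $O(c)$ rounds whenever we have a legal $c$-coloring of the graph.

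The next step is to bypass the $\Theta(\Delta)$ lower bound on legal colorings. I would invoke Theorem~\ref{thm: eps-coloring} on the underlying weighted graph (ignoring edge signs, since the coloring is purely a structural device) to compute, in $O(\log^* n)$ rounds, a weighted $\epsilon$-defective $O(\epsilon^{-2})$-coloring $\varphi$. Let $G'$ be the subgraph obtained by discarding all monochromatic edges of $G$. By construction $\varphi$ is a legal coloring of $G'$, and the weighted $\epsilon$-defective property guarantees that the total weight of discarded edges is at most $\epsilon \sum_{e \in E} w(e)$. Running the derandomized conditional-expectation algorithm from Theorem~\ref{thm: derand color} on $(G', \varphi)$ then terminates in $O(\epsilon^{-2} + \log^* n) = O(\log^* n)$ rounds for any constant $\epsilon$.

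The only remaining ingredient is an analog of Lemma~\ref{lem: drop edges approx} tailored to correlation clustering. The argument is essentially identical but uses the stronger lower bound $OPT \geq \sum_{e \in E} w(e)/2$, which follows from the probabilistic method applied to the random 2-clustering above. If $OPT'$ denotes the optimum agreement value on $G'$, then any solution on $G$ restricts to a solution on $G'$ losing at most the total weight of discarded edges, so $OPT' \geq OPT - \epsilon \sum_{e\in E} w(e) \geq (1 - 2\epsilon)\,OPT$. Hence a $1/2$-approximate solution for $G'$ is a $(1/2)(1 - 2\epsilon)$-approximate solution for $G$, which becomes a $(1/2 - \epsilon)$-approximation after absorbing the factor of two into the choice of $\epsilon$.

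The main obstacle is really only the small conceptual step of verifying that both the $1/2$-approximation argument and the edge-dropping argument go through in the signed setting: one must check that the random 2-clustering gives each edge, regardless of sign, an independent $1/2$ probability of agreeing, and that weighted $\epsilon$-defective coloring cares only about edge weights and not signs. Once those observations are in place, every other component follows directly from the machinery already developed: the locality of the utility function, the distributed method of conditional expectations, and the coloring-based speedup.
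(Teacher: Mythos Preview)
Your proposal is correct and follows essentially the same route as the paper: random two-clustering gives an expected $\tfrac{1}{2}\sum_e w(e)$ agreements, Lemma~\ref{lem: cuts local} and Theorem~\ref{thm: derand color} derandomize this over a legal coloring, Theorem~\ref{thm: eps-coloring} supplies an $O(\epsilon^{-2})$-coloring whose monochromatic edges carry at most an $\epsilon$-fraction of the total weight (signs ignored), and an analog of Lemma~\ref{lem: drop edges approx} with the sharper bound $OPT \ge \tfrac{1}{2}\sum_e w(e)$ converts a $1/2$-approximation on $G'$ into a $(1/2-\epsilon)$-approximation on $G$. This is exactly the argument the paper sketches in the paragraph preceding the theorem.
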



\bibliography{DMC}
\appendix
\newpage
\section{Extending Kuhn's algorithm for $\epsilon$-defective coloring}
\label{sec: kuhn}
We aim to extend the algorithm of Kuhn \cite{Kuhn09} to the case of a weighted $\epsilon$-defective coloring. Given a positively edge weighted graph and any coloring, for every vertex we denote by $E_m(v), E_b(v)$ its monochromatic and bi-chromatic edges respectively. Define its weighted defect as $defect_w(v) = \sum_{e=(u,v)\in E_m(v)} w(e)$. We aim to find a coloring s.t the defect for every $v$ is below $\epsilon w(v) = \epsilon \sum_{v\in e} w(e)$.

We show that Kuhn's algorithm can be adapted to this problem.


\paragraph{The algorithm (Algorithm 1)}: Given some coloring $[M]$ of the graph, each iteration of Kuhn's algorithm consists of assigning some unique function $\phi_x: A \rightarrow B$ to every color $x \in [M]$ (We abuse notation and also denote by $\phi_v$ the function assigned to $v$).
Then every node iterates over all $\alpha \in A$ and picks $\alpha$ s.t $ \sum_{v\in e, \phi_v(\alpha) = \phi_u(\alpha)} w(e) $ is minimized. We note that the family of functions used is the family of polynomials degree at most $k$ over some field.  Finally the vertex is assigned the color $(\alpha, \phi_v(\alpha))$.



\paragraph{Analysis:} We state a lemma analogous to Lemma 4.1 in \cite{Kuhn09}.
We show that it is possible to assign a color to $v$ s.t the defect of the vertex increases by at most $\epsilon \sum_{e \in E_b(v)} w(e)$.

\begin{lemma}
Assume we are given an $M$-coloring of $G$. For a value $k>0$ let the functions assigned to the colors be such that for any two colors $x,y \in [M]$, the functions $\phi_x, \phi_y$ intersect for at most $k$ values and s.t that $\size{A} > k \epsilon^{-1}$ then the increase in the weighted defect of every vertex $v$ of the new coloring computed by the algorithm is at most $\epsilon \sum_{e \in E_b(v)} w(e)$.
\end{lemma}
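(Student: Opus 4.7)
The plan is to mimic Kuhn's original averaging argument, but weight every term by $w(e)$ so that the bound comes out in terms of $\sum_{e\in E_b(v)} w(e)$ instead of $|E_b(v)|$. Fix a vertex $v$ and consider any bi-chromatic neighbor $u \in N(v)$. Since $v$ and $u$ have different colors in the input $M$-coloring, the functions $\phi_v$ and $\phi_u$ are distinct, so by hypothesis they agree on at most $k$ values of $A$. In other words, the set $S_u = \{\alpha \in A : \phi_v(\alpha) = \phi_u(\alpha)\}$ satisfies $|S_u| \le k$.

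Next I would sum the weighted conflict over all choices of $\alpha$. For $\alpha \in A$, define
\[
C(\alpha) \;=\; \sum_{\substack{u : (u,v)\in E_b(v) \\ \phi_v(\alpha)=\phi_u(\alpha)}} w(u,v).
\]
Swapping the order of summation,
\[
\sum_{\alpha \in A} C(\alpha) \;=\; \sum_{u : (u,v)\in E_b(v)} w(u,v)\,|S_u| \;\le\; k \sum_{e\in E_b(v)} w(e).
\]
An averaging argument then guarantees that at least one $\alpha^* \in A$ satisfies
\[
C(\alpha^*) \;\le\; \frac{k}{|A|} \sum_{e\in E_b(v)} w(e) \;<\; \epsilon \sum_{e\in E_b(v)} w(e),
\]
where the strict inequality uses the hypothesis $|A| > k\epsilon^{-1}$.

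Finally, since the algorithm picks the $\alpha$ minimizing exactly the quantity $C(\alpha)$ (this is precisely $\sum_{v\in e,\,\phi_v(\alpha)=\phi_u(\alpha)} w(e)$ restricted to $v$'s currently bi-chromatic edges, as monochromatic edges contribute the same to every choice), the chosen value achieves $C(\alpha) \le C(\alpha^*) < \epsilon \sum_{e\in E_b(v)} w(e)$. The new color of $v$ is $(\alpha, \phi_v(\alpha))$, and a previously bi-chromatic edge $(v,u)$ becomes monochromatic only if $\phi_v(\alpha) = \phi_u(\alpha)$ (the $\alpha$-component already agrees), so the added weighted defect at $v$ is exactly $C(\alpha)$, giving the claimed bound.

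The only subtle point, and the main thing to get right, is the distinction between $v$'s already-monochromatic edges (which contribute the same amount to every $\alpha$, hence cancel when we compare choices and are irrelevant to the \emph{increase}) and its bi-chromatic edges (whose contribution to the new monochromatic set is controlled by distinctness of $\phi_v,\phi_u$ and the intersection bound $k$). Once that bookkeeping is clean, the weighted averaging step is immediate and the proof is essentially identical to Kuhn's unweighted argument.
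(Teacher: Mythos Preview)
Your proposal is correct and follows essentially the same approach as the paper: both arguments compute the double sum $\sum_{\alpha\in A}\sum_{e=(u,v)\in E_b(v)} w(e)\cdot\chi_{\phi_v(\alpha)=\phi_u(\alpha)}$, bound it above by $k\sum_{e\in E_b(v)}w(e)$ via the $k$-intersection hypothesis, and conclude by averaging (the paper phrases this as a proof by contradiction, you phrase it as a direct pigeonhole, but the content is identical). Your handling of the already-monochromatic edges and of why membership in $C(\alpha_v)$ is necessary for a bi-chromatic edge to become monochromatic is also correct.
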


\begin{proof}
We note that in the proof of Lemma 4.1 in \cite{Kuhn09}, it is shown that if there two neighboring nodes $u,v$ with different colors that choose values $\alpha_u, \alpha_v$ s.t $\phi_v(\alpha_v) \neq \phi_u(\alpha_u)$ then after applying Algorithm 1, the edge will be bi-chromatic.

We show that for every node $v$ there exists some value $\alpha \in A$ s.t the increase in the weighted defect is at most $\epsilon \sum_{e \in E_b(v)} w(e)$.
Fix some node $v$ and assume by contradiction that this is not the case, thus for every $\alpha \in A$ it holds that the increase in weight of monochromatic edges is greater than $\epsilon \sum_{e \in E_b(v)} w(e)$. Let us count the total weight of monochromatic edges for all values of $\alpha$ in two different ways:
\begin{align*}
    k \sum_{e \in E_b(v)} w(e) \geq \sum_{e=(u,v) \in E_b(v)} \sum_{\alpha \in A} w(e) \cdot \chi_{\phi_v(\alpha) = \phi_u(\alpha)} \\=  \sum_{\alpha \in A} \sum_{e=(u,v) \in E_b(v)} w(e) \cdot \chi_{\phi_v(\alpha) = \phi_u(\alpha)} > \size{A} \epsilon \sum_{e \in E_b(v)} w(e)
\end{align*}
Thus, we get that $\size{A} < \epsilon^{-1} k$ which is a contradiction.

\end{proof}

Next the construction of such a family of functions remains the same using polynomials. Specifically, restating Theorem 4.6 in \cite{Kuhn09} using our parameters we get that:

\begin{theorem}
\label{thm: theorem 4.6}

Assume we are given an $M$ coloring of the graph. There are an explicit function $\phi_x$ for $x\in[M]$ and a constant $C_D>0$ s.t Algorithm 1 computes a $C_D\epsilon^{-2}\log^2_{1/\epsilon} M$ coloring s.t the weighted defect of every vertex $v$ increases by at most $\epsilon w(v)$.
\end{theorem}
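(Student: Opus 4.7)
The plan is to follow Kuhn's polynomial-based construction of the family $\{\phi_x\}_{x\in[M]}$ and combine it with the weighted defect lemma proved just above. I will take $A = B = \mathbb{F}_q$, where $q$ is the smallest prime power at least $k\epsilon^{-1}$ (whose existence in the range $[k\epsilon^{-1}, 2k\epsilon^{-1}]$ follows from Bertrand's postulate), and assign to each of the $M$ input colors a distinct polynomial $\phi_x \in \mathbb{F}_q[t]$ of degree at most $k$. Since two distinct polynomials of degree at most $k$ agree on at most $k$ points, the pairwise intersection hypothesis of the preceding lemma is satisfied by this choice of functions.

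Next I will pick $k$ so that there are at least $M$ distinct degree-$k$ polynomials available, i.e., $q^{k+1} \geq M$. With $q = \Theta(k/\epsilon)$, this is satisfied by $k = \Theta(\log M / \log(1/\epsilon)) = \Theta(\log_{1/\epsilon} M)$, because the dominant contribution to $\log q^{k+1}$ is $k \log(1/\epsilon) = \Theta(\log M)$. Since then $|A| = q \geq k \epsilon^{-1}$, the preceding lemma applies and guarantees that every vertex $v$ can locally pick a value $\alpha \in A$ so that the increase in its weighted defect is at most $\epsilon \sum_{e \in E_b(v)} w(e) \leq \epsilon w(v)$, which is exactly the defect guarantee demanded by the theorem.

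Finally, since the new color of each vertex lies in $A \times B$, the output coloring uses $q^2 = O(k^2/\epsilon^2) = O(\epsilon^{-2} \log_{1/\epsilon}^2 M)$ colors, matching the claimed bound once constants are absorbed into $C_D$. The explicit functions $\phi_x$ promised by the theorem are then simply the enumeration of the degree-$k$ polynomials used above.

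The main obstacle I anticipate is the parameter balancing: I need a single choice of $k$ that simultaneously satisfies $q^{k+1} \geq M$ (so the encoding of $M$ colors by distinct polynomials is injective) and $q^2 \leq C_D \epsilon^{-2} \log_{1/\epsilon}^2 M$ (so the final color count is within budget), while also keeping $q \geq k\epsilon^{-1}$ so the weighted lemma fires. These are exactly the arithmetic inequalities handled in Kuhn's original proof of Theorem 4.6, so the novelty here is purely in substituting the unweighted defect bound with the weighted one established in the preceding lemma; the combinatorial construction of the polynomial family carries over unchanged, and no additional structural argument about the graph is required.
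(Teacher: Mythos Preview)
Your proposal is correct and matches the paper's approach exactly: the paper does not give a standalone proof of this theorem but simply states that ``the construction of such a family of functions remains the same using polynomials'' and restates Kuhn's Theorem~4.6 with the weighted parameters. Your write-up spells out precisely that construction---distinct degree-$k$ polynomials over $\mathbb{F}_q$ with $q=\Theta(k/\epsilon)$ and $k=\Theta(\log_{1/\epsilon} M)$---and correctly observes that the only change from Kuhn is swapping in the weighted defect lemma for the unweighted one; the arithmetic on $q^{k+1}\ge M$ and $q^2=O(\epsilon^{-2}\log_{1/\epsilon}^2 M)$ is the same parameter balancing as in the original.
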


Finally we prove a theorem equivalent to Theorem 4.9 in \cite{Kuhn09} using our parameters. The proof is almost identical to the original one, and we present it here for completeness.
\begin{theorem-repeat}{thm: eps-coloring}
\ThmKuhnExt
\end{theorem-repeat}
\begin{proof}
Let $M=[n]$ be the coloring induced by the unique identifiers of the vertices. Define $\epsilon_i = \epsilon_0 / 2^{T-i}, \epsilon_0 = \epsilon/2$, where $T$ is the smallest positive integer s.t $\ln^{(T)} M \leq 8\sqrt{C_D}/\epsilon_0 $. We run Algorithm 1 $T$ times with parameters $\epsilon_i$ in the $i$-th iteration.

For all values of $i$ this implies that:
\begin{align}
    8\sqrt{C_D}/\epsilon_i = 8\sqrt{C_D}\epsilon_0^{-1} 2^{T-i}
    \leq 2^{T-i} \ln^{(T-1)} M \leq \ln^{(i-1)} M, 
\end{align}
where the first inequality is due to the definition of $T$ (that is, for $T-1$ the inequality holds in the other direction), and the second is due to the fact that for all $x>0$ it holds that $2\ln x < x$.

Using induction we show that for all $i \leq T$ we have that
$$M_i \leq 16C_D(\epsilon^{-1}_i \ln^{(i)}M)^2.$$
For $i=1$ the claim holds by Theorem~\ref{thm: theorem 4.6} (and because $\epsilon_0^{-1} > e$). For the remaining values of $i$, due to Theorem~\ref{thm: theorem 4.6} it holds that $M_i \leq C_D (\epsilon_{i}^{-1} \cdot \log_{1/ \epsilon_i}{M_{i-1}})^2$. Thus, it is enough to show that $\log_{1/ \epsilon_i}{M_{i-1}} \leq 4\ln^{(i)}M$. It holds that
\begin{align*}
   \log_{1/ \epsilon_i}{M_{i-1}} &\leq_{(1)} \ln{M_{i-1}}
    \\ &\leq \ln{16C_D(\epsilon_{i-1}^{-1} \ln^{(i-1)} M)^2} = \ln{16C_D(2\epsilon_{i}^{-1} \ln^{(i-1)} M)^2}
    \\ &\leq_{(2)}
    \ln{(\ln^{(i-1)}  M)^4} \leq 4 \ln^{(i)} M, 
\end{align*}
where the first inequality is due to the fact that $\epsilon_0^{-1} \geq e$ and the third is due to inequality (1). Finally plugging in $i=T$ we get that: $M_T \leq 16C_D\epsilon_0^{-1} \ln^{(T)} M \leq 16C_D/\epsilon_0 \cdot 8\sqrt{C_D}/\epsilon_0 = O(\epsilon^{-2})$.
As the the weighted $\epsilon$-defect for every vertex $v$ is bounded by $w(v)\sum_{i=0}^{T}\epsilon_i \leq \epsilon w(v)$, this completes the proof.
\end{proof}

\section{Max 2-SAT}
\label{sec:max 2-sat}
In this section we consider the problem of Max 2-SAT. We are given a set of weighted clauses $\mathcal{C}$ over the set of node variables, such that each clause contains at most two literals. We wish to find an assignment maximizing the weight of satisfied clauses. As before, we are interested in adapting a sequential algorithm to the distributed setting.
The algorithm we shall adapt is the sequential algorithm presented in \cite{PoloczekSWZ17} which achieves a 3/4-approximation in expectation for the problem of weighted Max-SAT. It is based on the results of \cite{BuchbinderFNS15}, thus it is almost identical to Algorithm~\ref{alg: rand-submodular}. Before presenting the algorithm we need some preliminary definitions.

We allow the node variables to take on values in $\set{0,1,\emptyset}$ where $X_v=\emptyset$ means that the value to $X_v$ has not yet been assigned. We define two utility functions, the first, $f_T$ counts the weight of clauses satisfied given the assignment, and the second $f_F$ counts the weight of clauses that are unsatisfied given the assignment (all literals are false). We note that until now our functions depended on the node variables and the topology of the graph (which we omitted as a parameter), and now they also depend on the clauses. As with the topology, when we pass $\Xbar$ as a parameter we assume we also pass all of the clauses, while when we pass $\Xnv$ as a parameter, we assume we also pass all of the clauses that contain $X_v$ as a literal.

Denote by $w(C)$ the weight of some $C \in \mathcal{C}$. For any $C \in \mathcal{C}$ we define two values, $C_T(\Xbar), C_F(\Xbar)$, where $C_T(\Xbar)=w(C)$ if the clause is satisfied by $\Xbar$ (and 0 otherwise) while $C_F(\Xbar)=w(C)$ if the clause is falsified by the assignment (and 0 otherwise). It holds that $f_T(\Xbar) = \sum_{C \in \mathcal{C}} C_T(\Xbar)$ and $f_F(\Xbar) = \sum_{C \in \mathcal{C}} C_F(\Xbar)$.
Both functions are local utility functions. We prove the following lemma.
\begin{lemma}
\label{lem: max 2-sat local utility}
Both $f_T$ and $f_F$ are local utility functions.
\end{lemma}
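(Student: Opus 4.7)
The plan is to exploit the defining property of Max 2-SAT, namely that every clause contains at most two literals, so flipping $X_v$ can change the truth value of only those clauses in which $X_v$ appears as a literal. Since each node is assumed to know all clauses in which it appears, these clauses constitute ``local data'' at $v$. Moreover, each such clause involves $X_v$ together with at most one other variable $X_u$; under the natural distributed convention for this problem (any two node-variables that co-occur in a clause are adjacent in the communication graph), we have $u \in N(v)$, so any assignment $\{X_u = \alpha_u\}$ already appears in $L_v[\bar{X}]$.

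Concretely, for $f_T$ I would fix $v$ and values $\alpha,\alpha' \in \{0,1,\emptyset\}$ and write
\[
f_T(\bar{X} \cup \{X_v = \alpha'\}) - f_T(\bar{X} \cup \{X_v = \alpha\}) = \sum_{\substack{C \in \mathcal{C} \\ X_v \in C}} \bigl( C_T(\bar{X} \cup \{X_v = \alpha'\}) - C_T(\bar{X} \cup \{X_v = \alpha\}) \bigr),
\]
because every term from a clause not mentioning $X_v$ cancels exactly (its value does not depend on $X_v$). Each surviving summand is determined by the clause $C$ and its weight (both known at $v$), the new and old values of $X_v$, and the assignment to the at most one remaining literal of $C$, whose variable lies in $N(v)$ and is therefore part of $L_v[\bar{X}]$. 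Setting $g_v(L_v[\bar{X}], \alpha', \alpha)$ to be the right-hand side above matches the definition of a local utility function.

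The argument for $f_F$ is essentially a verbatim copy, with $C_T$ replaced by $C_F$: the same clauses survive the cancellation, and each contribution is again a function of locally available data together with $\alpha$ and $\alpha'$. I do not foresee any real obstacle to the proof; the whole content of the lemma is that the ``2'' in Max 2-SAT is precisely what makes these utility functions local, and this is used exactly once in the cancellation step. The only subtlety worth stating explicitly at the start is the communication-graph convention placing an edge between any two variables that share a clause---once this is in place, the remainder is a one-line calculation for each of $f_T$ and $f_F$.
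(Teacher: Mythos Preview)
Your proposal is correct and follows essentially the same approach as the paper: both arguments observe that clauses not containing $X_v$ cancel in the difference, and that each surviving clause depends only on $X_v$ and at most one other variable lying in $N(v)$, so the difference is a function of $L_v[\bar X]$, $\alpha$, and $\alpha'$. The paper adds a small bookkeeping device---assigning each clause to one of its variables and writing $f_T$ as a double sum $\sum_{u}\sum_{C\in C_u}$---but this is not needed for the argument and your more direct ``sum over clauses containing $X_v$'' formulation is equivalent and arguably cleaner.
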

\begin{proof}
Let us assign each clause $C$ to one of the variables which appears in it as a literal (perhaps there is only a single one).
Denote for each variable by $C_v$ the set of clauses assigned to it.
Because this is a 2-SAT instance, it holds that if $C \in C_v$ then $C_T(\Xbar) = C_T(L_v[\Xbar])$ and $C_F(\Xbar) = C_F(L_v[\Xbar])$.

It holds that $f_T(\Xbar) = \sum_{v\in V} \sum_{C \in C_v} C_T(L_v[\Xbar])$.
Thus it holds that:
\begin{align*}
&f_T(\Xbar \cup \set{X_v = \alpha'}) - f_T(\Xbar \cup \set{X_v = \alpha})
\\ &= \sum_{u\in N(v)} \sum_{C \in C_u} C_T(L_u[\Xbar]\cup \set{X_v = \alpha'}) - C_T(L_u[\Xbar]\cup \set{X_v = \alpha})
\\ &= \sum_{u\in N(v)} \sum_{C \in C_u} C_T(L_v[\Xbar]\cup \set{X_v = \alpha'}) - C_T(L_v[\Xbar]\cup \set{X_v = \alpha}) \triangleq g_v(L_v[\Xbar,\alpha',\alpha]), 
\end{align*}
where the first equality is because all clauses that do not contain $X_v$ as a literal immediately get removed from the sum, and the final equality is because the clauses are of size at most two, thus every clause that is affected by an assignment to $X_v$ only depends on $X_v$ and $X_u$ s.t $u\in N(v)$, so we may state the difference as a local function. This finishes the proof for $f_T$. The proof for $f_F$ is identical.
\end{proof}

Now we describe the algorithm of \cite{PoloczekSWZ17} (Algorithm~\ref{alg: rand-max-2sat}). In \cite{PoloczekSWZ17}, it is shown that this algorithm achieves a $3/4$ approximation for weighted Max-SAT. In the distributed setting we are interested in the special case of 2-SAT.

Initially all variables are unassigned. The algorithm iterates over the variables in any order. Our aim in each iteration is to maximize the difference between the weight of the satisfied clauses and that of the unsatisfied clauses given the partial assignment. The algorithm calculates two quantities $t_i,f_i$, which correspond to this difference if we set the current variable to 0 or 1, given the current partial assignment. Finally, we flip a coin and set the variable to each value with a probability proportional to its benefit. If the benefit is negative, this probability is 0.

\RestyleAlgo{boxruled}
\LinesNumbered
\begin{algorithm}[htbp]
	
	\caption{\texttt{rand-MaxSAT($G, \mathcal{C}$)}}
	\label{alg: rand-max-2sat}
    $\Xbar = \set{ X_v = \emptyset \mid v \in V}$\\
	Define any order over the variables: $X_{v_1},...,X_{v_n}$
	\For{$i$ from 1 to $n$ }
	{
	    $t_i = (f_T(\Xbar \cup \set{X_{v_i} = 1}) - f_T(\Xbar)) - (f_F(\Xbar \cup \set{X_{v_i} = 1}) - f_F(\Xbar))$\\
        $f_i = (f_T(\Xbar \cup \set{X_{v_i} = 0}) - f_T(\Xbar)) - (f_F(\Xbar \cup \set{X_{v_i} = 0}) - f_F(\Xbar))$\\
        \If{$f_i \leq 0$}
        {
        	$p = 1$\\
        }
        \ElseIf{$t_i \leq 0$}
        {
            $p = 0$\\
        }
        \Else
        {
        	$p = t_i / (t_i + f_i)$\\
        }
        Flip a coin with probability $p$\\
        \If{heads}
        {
            $X_{v_i} = 1$
        }
        \Else
        {
            $X_{v_i} = 0$
        }
    }
    return $\Xbar$\\
\end{algorithm}

Due to Lemma~\ref{lem: max 2-sat local utility} we see that calculating $t_i,f_i$ can be done locally, and thus Algorithm~\ref{alg: rand-max-2sat} is an orderless-local algorithm. Thus we can immidiatly state the following theorem:

\begin{theorem}
\label{thm: max-sat color}
For any graph $G$ and a Weighted Max 2-SAT instance, there exists a randomized distributed expected 3/4-approximation algorithm running in $O(c)$ communication rounds in the CONGEST model, given a legal $c$-coloring.
\end{theorem}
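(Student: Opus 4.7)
The plan is to show that Algorithm~\ref{alg: rand-max-2sat} fits into the orderless-local framework of Section~3, at which point Theorem~\ref{thm: orderless greedy} immediately delivers the claimed distributed algorithm. The expected $3/4$-approximation guarantee of the sequential algorithm is already established in~\cite{PoloczekSWZ17}, so the only real work is the recasting into the orderless-local mold.

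First I would extend the value set to $A = \set{0,1,\emptyset}$, where $\emptyset$ represents ``unassigned'', and take $init$ to return $\emptyset$ regardless of input. I would then replace the global bookkeeping in Algorithm~\ref{alg: rand-max-2sat} with a $decide$ function at node $v_i$ that, given $L_{v_i}[\Xbar]$ and fresh random bits $r_i$, computes $t_i$ and $f_i$, flips the appropriately biased coin, and returns the chosen bit. The critical step is to argue that $t_i$ and $f_i$ are local quantities. By Lemma~\ref{lem: max 2-sat local utility} both $f_T$ and $f_F$ are local utility functions, so each of the four differences appearing in $t_i$ and $f_i$ -- namely $f_T(\Xbar \cup \set{X_{v_i}=b}) - f_T(\Xbar)$ and the analogous one for $f_F$, for $b\in\set{0,1}$ -- is determined by $L_{v_i}[\Xbar]$ via the local function $g_{v_i}$. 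This matches exactly the template of the $decide$ function in Algorithm~\ref{alg: orderless-local}.

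Once this is in place, Theorem~\ref{thm: orderless greedy} yields a distributed algorithm running in $O(c)$ communication rounds whose expected utility matches that of the sequential algorithm, and hence gives an expected $3/4$-approximation. For CONGEST, it suffices to note that in each round a node only needs to broadcast its current value in $\set{0,1,\emptyset}$, which takes $O(1)$ bits; the clauses containing $X_v$ as a literal (together with their weights) are known to $v$ from the start and never need to traverse an edge during the execution.

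The main obstacle is not conceptual but requires some care: one must verify that Lemma~\ref{lem: max 2-sat local utility} continues to apply when some entries of $\Xbar$ are $\emptyset$, so that a clause assigned to a node with an unassigned literal still contributes a quantity depending only on its assigned variable's 1-hop neighborhood. This follows from the clause-to-variable assignment used in the proof of Lemma~\ref{lem: max 2-sat local utility} -- each 2-clause involves at most two literals, hence at most one neighbor of the variable it is assigned to -- but it is the one bookkeeping step that must be justified explicitly before invoking Theorem~\ref{thm: orderless greedy}.
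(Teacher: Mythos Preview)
Your proposal is correct and follows essentially the same approach as the paper: the paper simply observes that Lemma~\ref{lem: max 2-sat local utility} makes $t_i,f_i$ locally computable, hence Algorithm~\ref{alg: rand-max-2sat} is orderless-local, and then invokes Theorem~\ref{thm: orderless greedy}. Your write-up supplies more of the routine details (the explicit $init$/$decide$ recasting, the CONGEST message-size check, and the handling of the $\emptyset$ value) than the paper's one-line justification, but the argument is the same.
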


We now show that the same technique used for the Max-DiCut problem can be applied here. We consider the distributed algorithm guaranteed from Theorem~\ref{thm: max-sat color}. We note that in order to compute $t_i,f_i$ for some node $v$ we only need to know the variables in the 1-hop environment that share a clause with $v$. Thus, if there is an edge $(u,v) \in E$ but no clause depends on both $X_v, X_u$ we may drop that edge. Formally, if we define by $H(V,E_H)$ the graph resulting from $G$ by dropping all said edges, the execution of the algorithm for $H$ is identical to that of $G$. And now we may use the same method as before of coloring the vertices at random with colors from $[\ceil{\epsilon^{-1}}]$.

Let us ignore nodes in $H$ which have no edges. These nodes may only contain unit clauses (clauses with a single literal), and the solution achieved by the greedy algorithm for that subset of clauses is optimal. This is because if a node only contains a single literal, the algorithm will satisfy that literal. If it contains two literals which contradict each other, any assignment satisfies one of them, which is optimal.

Let us now assign every clause containing two literals to the edge between the two nodes, and any unit clause to one of the node's edges, denote this set of clauses by $C_e$ for every edge $e\in E_H$. Next we define for every $e\in E_H$ its weight to be $w(e) = \sum_{C \in C_e} w(C)$, thus $\sum_{e\in E_H} w(e) = \sum_{C\in \mathcal{C}} w(c) = w(\mathcal{C})$.


From here the proof follows the lines of Section~\ref{sec: fast cuts}.
Let $H'$ be the graph resulting from $H$ after removing all monochromatic edges and all clauses associated with those edges. This results in a an expected decrease of $\epsilon w(\mathcal{C})$ in total clause weight. We prove a lemma equivalent to Lemma~\ref{lem: drop edges approx} in Section~\ref{sec: fast cuts}.

\begin{lemma}
\label{lem: drop edges approx 2-sat}
Let $H'(V,E_H')$ be a graph resulting from removing any subset of edges from $H$ of weight at most $\epsilon w(\mathcal{C})$. Then for any constant $p$, any $p$-approximation for Max 2-SAT for $H'$ is a $p(1-2\epsilon)$-approximation for $H$.
\end{lemma}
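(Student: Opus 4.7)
The plan is to mirror the proof of Lemma~\ref{lem: drop edges approx} almost verbatim, replacing the $1/4$ random-cut bound with the $1/2$ random-assignment bound for 2-SAT. Let $OPT$ denote the optimum value of Max 2-SAT on $H$ (with the full clause set $\mathcal{C}$) and let $OPT'$ denote the optimum value on $H'$ (with the reduced clause set obtained by discarding the clauses assigned to removed edges). First I would observe that any assignment $\Xbar$ is simultaneously feasible for $H$ and for $H'$, and the value of $\Xbar$ on $H$ exceeds its value on $H'$ by at most the total weight of the discarded clauses, which by hypothesis is at most $\epsilon\, w(\mathcal{C})$. Plugging in the optimal assignment for $H$ yields
\begin{equation*}
OPT' \;\geq\; OPT - \epsilon\, w(\mathcal{C}).
\end{equation*}

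The second step is to lower-bound $OPT$ in terms of $w(\mathcal{C})$. Assign every variable independently and uniformly at random in $\{0,1\}$. Every clause has at most two literals, and any non-trivial clause is satisfied with probability at least $1/2$ under this distribution (a unit clause with probability exactly $1/2$, a two-literal clause with probability $3/4$). Hence the expected weight of satisfied clauses is at least $w(\mathcal{C})/2$, and by the probabilistic method
\begin{equation*}
OPT \;\geq\; \tfrac{1}{2}\, w(\mathcal{C}), \qquad \text{i.e.,}\qquad w(\mathcal{C}) \;\leq\; 2\, OPT.
\end{equation*}

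Finally, let $\Xbar'$ be any assignment returned by a $p$-approximation algorithm on $H'$, so that its value on $H'$ is at least $p \cdot OPT'$. Since $H$ has only additional (non-negative-weight) clauses beyond those of $H'$, the value of $\Xbar'$ on $H$ is also at least $p \cdot OPT'$. Combining the two inequalities above,
\begin{align*}
p \cdot OPT' \;&\geq\; p\bigl(OPT - \epsilon\, w(\mathcal{C})\bigr) \\
&\geq\; p\bigl(OPT - 2\epsilon\, OPT\bigr) \\
&=\; p(1-2\epsilon)\, OPT,
\end{align*}
which establishes the claimed approximation ratio.

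There is no real obstacle here; the argument is essentially the Max-DiCut/Max $k$-Cut proof with the factor $4$ replaced by $2$ because uniform random assignments satisfy 2-SAT instances with probability $\geq 1/2$ rather than cutting edges with probability $\geq 1/4$. The only point that deserves a sentence of care is the reduction of the ``edges'' accounting to the ``clause weight'' accounting, which is exactly how $w(e)$ was defined in the preceding paragraph via $\sum_{e \in E_H} w(e) = w(\mathcal{C})$.
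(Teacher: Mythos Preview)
Your proof is correct and follows essentially the same argument as the paper's own proof: bound $OPT' \geq OPT - \epsilon\, w(\mathcal{C})$, use the random-assignment probabilistic argument to get $OPT \geq w(\mathcal{C})/2$, and chain the inequalities. The only (welcome) addition is that you make explicit the step that the $p$-approximate assignment on $H'$ has at least the same value on $H$, which the paper leaves implicit.
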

\begin{proof}
Let $OPT, OPT'$ be optimal solutions for $H, H'$, respectively. We can immediately see that $OPT' \geq OPT - \epsilon w(\mathcal{C})$, as any solution for $H$ is also a solution for $H'$ whose value differs by at most $\epsilon w(\mathcal{C})$.

Using the probabilistic method implies that for Weighted Max 2-SAT it holds that $OPT \geq w(\mathcal{C})/2$. This is because a single true literal is enough to satisfy a clause, thus setting the variable values uniformly at random guarantees that every clause has probability at least 1/2 of being satisfied (it equals 1/2 when the clause is a unit clause). Thus, in expectation every clause contributes half its weight to the sum. This guarantees that there exists some solution of value at least $w(\mathcal{C})/2$, and by definition it holds that $OPT \geq w(\mathcal{C})/2$.

  Using all of the above we can say that given a $p$-approximate solution for $OPT'$ it holds that:
$$p\cdot OPT' \geq p(OPT - \epsilon w(\mathcal{C})) \geq  p(OPT - 2\epsilon OPT) = p(1-2\epsilon)OPT $$

\end{proof}

Finally the proof of the expected approximation value follows exactly as in Section~\ref{sec: fast cuts}. We state the main theorem for this section.

\begin{theorem}
\label{thm: defective dicut}
There exists a randomized distributed $(3/4-\epsilon)$-approximation for Weighted Max 2-SAT running in $O(\epsilon^{-1})$ communication rounds in the CONGEST model.
\end{theorem}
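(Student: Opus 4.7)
The plan is to mirror the strategy of Section~\ref{sec: fast cuts} for the cut problems: start from the distributed orderless-local algorithm guaranteed by Theorem~\ref{thm: max-sat color}, reduce to a sub-instance with a cheap legal coloring via a random palette, and argue that little clause weight is lost.

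First I would set up the underlying graph carefully. The algorithm of Theorem~\ref{thm: max-sat color} runs on $G$, but each node $v$ only needs information about neighbors $u$ that share a clause with it, so I would work with the sub-graph $H(V,E_H)$ obtained from $G$ by deleting edges that do not carry any clause. As already observed in the excerpt, vertices that are isolated in $H$ can only appear in unit clauses, and the greedy orderless-local rule sets them optimally in a single round, so I would handle them separately and concentrate on the subgraph induced by the remaining nodes. For each $e\in E_H$ I would assign the clauses that depend only on its endpoints to $e$, so $w(e)=\sum_{C\in C_e}w(C)$ and $\sum_{e\in E_H}w(e)=w(\mathcal{C})$.

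Second, I would have every node choose $\varphi(v)$ uniformly at random from $[\ceil{\epsilon^{-1}}]$ and form $H'$ from $H$ by discarding every monochromatic edge together with the clauses assigned to it. Then $\varphi$ is a legal $\ceil{\epsilon^{-1}}$-coloring of $H'$, so by Theorem~\ref{thm: max-sat color} the distributed $3/4$-approximation runs on $(H',\mathcal{C}')$ in $O(\epsilon^{-1})$ rounds, which is the promised round complexity. Independence of the random colors together with linearity of expectation gives $E\bigl[\sum_{e\text{ removed}}w(e)\bigr] \le \epsilon \cdot w(\mathcal{C})$, matching the hypothesis of Lemma~\ref{lem: drop edges approx 2-sat}.

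Third, I would combine the two guarantees. Let $\delta$ be the (random) fraction of total clause weight discarded; conditioning on $\delta$, Lemma~\ref{lem: drop edges approx 2-sat} and the $3/4$ approximation on $H'$ yield an expected $\tfrac{3}{4}(1-2\delta)$-approximation for $G$ conditioned on $\delta$. Applying the law of total expectation, as in the argument following Lemma~\ref{lem: drop edges approx} in Section~\ref{sec: fast cuts}, the overall expected ratio is $\tfrac{3}{4}(1-2\,E[\delta]) \ge \tfrac{3}{4}(1-2\epsilon)$, and choosing the free parameter as $\epsilon/2$ (or absorbing constants) delivers $(3/4-\epsilon)$.

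The only non-routine step is the reduction to $H$: one has to check that the distributed algorithm's decisions depend only on neighbors sharing a clause, so that deleting the other graph edges does not change the execution, and that the isolated-in-$H$ nodes only carry unit clauses (or pairs of contradicting unit clauses) which are optimally satisfied by the greedy rule. Once this is in place, the random-coloring reduction and the approximation accounting are identical to the Max-DiCut argument, so I do not expect additional obstacles.
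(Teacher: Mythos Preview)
Your proposal is correct and follows essentially the same approach as the paper: reduce to the clause-graph $H$, handle isolated nodes separately, randomly color with $\ceil{\epsilon^{-1}}$ colors, drop monochromatic edges with their clauses, apply Theorem~\ref{thm: max-sat color} on the resulting legally-colored instance, and combine Lemma~\ref{lem: drop edges approx 2-sat} with the law-of-total-expectation argument from Section~\ref{sec: fast cuts}. The paper's own proof is exactly this outline, deferring the final accounting to the identical argument for Max-DiCut.
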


\end{document}